\DeclareMathSymbol{\E}{\mathalpha}{AMSb}{"45}
\DeclareMathSymbol{\R}{\mathalpha}{AMSb}{"52}
\renewcommand{\P}{\mathbb{P}}
\newcommand{\one}{\mathbf{1}}
\newcommand{\N}{\mathbb{N}}
\newcommand{\Z}{\mathbb{Z}}
\newcommand{\F}{\mathcal{F}}
\newcommand{\ST}{\mathfrak{T}}
\newcommand{\STtT}{\ST [t,T]}
\newcommand{\Ar}{\mathfrak{R}}
\renewcommand{\L}{\operatorname{L}}
\DeclareMathOperator{\D}{D}
\newcommand{\Dl}{\D_{\ell}}
\newcommand{\Dlk}{\D_{\ell_k}}
\DeclareMathOperator{\fddelta}{\delta}
\newcommand{\fddeltatau}{\fddelta_\tau}
\newcommand{\fddeltak}{\fddelta_{h_k,\ell_k}}
\DeclareMathOperator{\fdDelta}{\Delta}
\newcommand{\fdDeltak}{\fdDelta_{h_k,\ell_k}}
\DeclareMathOperator{\tr}{tr}
\newcommand{\mt}{\mathcal{M}_T}
\newcommand{\bmt}{\bar{\mt}}
\newcommand{\gb}{g_{R_1}}
\newcommand{\wb}{w_{R_1}}
\newcommand{\wth}{w_{\tau,h}}
\newcommand{\wthRone}{w_{\tau,h}^{R_1}}
\newcommand{\wthRR}{w_{\tau,h}^{R,R_1}}
\newcommand{\on}{\,\, \textrm{on} \,\,}
\newcommand{\cand}{\quad \textrm{and} \quad}
\newcommand{\cfor}{\quad \text{for} \quad}
\newcommand{\e}{\varepsilon}
\newtheorem{anytheorem}{Theorem}[section] 
\newtheorem{theorem}[anytheorem]{Theorem}
\newtheorem{lemma}[anytheorem]{Lemma}
\newtheorem{corollary}[anytheorem]{Corollary}
\theoremstyle{definition}
\newtheorem{assumptions}[anytheorem]{Assumption}
\newtheorem{remark}[anytheorem]{Remark}
\begin{document}
%\runtitle{Error estimates for approximations of American put option price}
%\runauthor{D. \v{S}i\v{s}ka}
%\begin{topmatter}
\title{Error estimates for finite difference approximations of American put option price} 

\author{David \v{S}i\v{s}ka} 
\address{Fakult\"at f\"ur Mathematik, Universit\"at Bielefeld, Universit\"atsstra{\ss}e, D-33501 Bielefeld, Germany}
\email{dsiska@math.uni-bielefeld.de}
\date{\today}

\begin{abstract}
Finite difference approximations to multi-asset American put option price are considered. 
The assets are modelled as a multi-dimensional diffusion process with variable drift and volatility. 
Approximation error of order one quarter with respect to the time discretisation parameter and one half with respect to the space discretisation parameter is proved by reformulating the corresponding optimal stopping problem 
as a solution of a degenerate Hamilton-Jacobi-Bellman equation.
Furthermore, the error arising from restricting the discrete problem to a finite grid by reducing the original problem to a bounded domain is estimated.
\end{abstract}

\subjclass{65M06; 65M12; 60G40; 35R35; 91G80; 91G60}
\keywords{American put option, Finite difference method, Optimal stopping, Optimal control, Hamilton-Jacobi-Bellman equation}

%\end{topmatter}
\maketitle

\section{Introduction}
\label{s:oa}
American put option is a derivative contract based
on the price of some asset, denoted by $S_t$, which evolves with time. 
At time $t=0$, when the contract is
entered, a reference asset is chosen together with
``strike'' $K_\textrm{strike}$ and expiry time $T$. The contract
gives the holder the right (but not the
obligation) to sell the asset at any time $T^*
\in [0,T]$ for the amount $K_\textrm{strike}$. Thus, when
exercised, the payoff from the American put is $K_\textrm{strike}
- S_{T^*}$. The option will only ever be exercised by a ``rational'' investor
if the payoff is positive, so the payoff from
American put is $[K_\textrm{strike}-S_{T^*}]_+$, where $[x]_+$ denotes the positive part of any real number $x$.
Even in the classical Black-Scholes model, there is no known formula for the price of an American put with a finite exercise time (there are formulae for prices of infinite exercise time American put, American call option and European put and call options). 
A variety of numerical methods and approximations
for the American put option price have been
developed over the years. An overview of the
various methods can be found for example in
Barone-Adesi~\cite{baroneadesi:saga}. 

Four main approaches for calculating the put option price can be identified: 
The first attempts to find formulae that give results close to the real price. 
These give fast approximations but the accuracy cannot 
be simply improved upon by doing more computations. The
second approach approximates the evolution of the 
underlying asset with a recombining tree (typically
binomial or trinomial). Backward induction on the tree
then yields the American put price. 
Error estimates for the binomial tree approach are 
proved in Lamberton~\cite{lamberton:error} and improved in 
Lamberton and Rogers~\cite{lamberton:optimal} and Lamberton~\cite{lamberton:brownian}. 
The third approach is based on Monte Carlo methods.
Finally the fourth approach relies on reformulating the
option price as a solution to a partial differential inequality or a nonlinear partial differential equation. This equation can then be discretised using a variety of methods. For the finite element
method see for example Allegretto et al.~\cite{allegretto:finite} and also Pironneau and Achdou 
\cite[Section 6.4]{bensoussan:handbook}.
Finite volume methods have been used by Angermann and Wang~\cite{angermann:convergence} and Berton and Eymard~\cite{berton:eymard:finite}. 
Nevertheless finite difference methods are particularly popular and it is those that we focus on here. 

If a diffusion process is used to model 
the asset price, then it has been shown that the
American put price is the payoff function of an
optimal stopping problem. 
See for example Shiryaev~\cite{shiryaev:essentials}.
The payoff function for this optimal stopping
problem satisfies a system of second order partial
differential inequalities. The solution to this
system can be approximated using finite difference
methods. See for example Lamberton and Lapeyre
\cite[Chapter 5]{lamberton:introduction} for a concise introduction.
They also present a simple algorithm for computing 
the solution to the finite difference problem. 
A more efficient iterative method based on the SOR 
method, called projected SOR, is given in Pironneau and Achdou
\cite[Section 6.4.1]{bensoussan:handbook}.
Finding more efficient ways of computing 
the solution of the finite difference
problem are of considerable interest. 
The reader is referred to Forsyth and Vetzal~\cite{forsyth:quadratic}
and Cen and Le~\cite{cen:robust}
where the penalty method
and singularity separating, 
implicit finite difference scheme, 
are studied and compared with other methods.

This paper is focused on implicit finite difference approximations to the American put option price in the case when there is one or more underlying assets with variable diffusion coefficient, drift, and discounting.  
We prove that the error introduced by the implicit finite difference approximation is, under suitable regularity assumptions, of order $\tau^{1/4}+h^{1/2}$, where $\tau$ denotes the space discretisation parameter and $h$ denotes the space discretisation parameter. 
The only other result in this direction is Hu et. al.~\cite{hu:optimal}, as far as the author is aware. In Hu et. al.~\cite{hu:optimal} 
rate of convergence of order $\tau^{1/2} + h$ is proved and furthermore it is show that this is an optimal rate of convergence.
However this is done in a much simpler setting then this article considers. In particular, only one risky asset is considered and it is assumed to have constant drift and diffusion coefficients. Furthermore this article adds the estimates for the error arising in computing the discrete problem on a finite domain.
The reason why the same rate of convergence as in Hu et. al.~\cite{hu:optimal} is not obtained in this article is that here the diffusion coefficients are allowed to degenerate.
Hence the solution can only be Lipschitz continuous in the space variable.
The result of Hu et. al.~\cite{hu:optimal} requires more regularity and only holds in the non-degenerate case.

The way to obtaining the rate of convergence has been paved by recent work of Krylov on the rates of convergence of finite difference approximations to the Bellman equation (also known as Hamilton-Jacobi-Bellman equation).
The first rate of convergence estimates were obtained in
Krylov~\cite{krylov:rate:equations}.
This has later been
extended to the case of variable coefficients in Krylov~\cite{krylov:rate:variable}.
The rate of convergence has been further improved, in the case of constant coefficients, in Krylov and Dong,~\cite{dong:krylov:rate:constant}. 
Finally the convergence rate of order $\tau^{1/4} + h^{1/4}$ for Bellman equations with Lipschitz continuous coefficients is obtained in
Krylov~\cite{krylov:rate:lipschitz:published}. 
These results have been extended in Gy\"ongy and \v{S}i\v{s}ka~\cite{gyongy:siska:on:the:rate} 
to cover the Bellman equation corresponding to optimal stopping of controlled diffusion with Lipschitz continuous coefficients. In Krylov and Dong~\cite{dong:krylov:on:time:inhomogenous} the results from Krylov~\cite{krylov:rate:lipschitz:published} have been extended to allow domains not equal to $\R^d$.
In Krylov~\cite{krylov:apriori} one of the key ingredients of the proof, the discrete gradient estimate, has been generalised to allow estimates for other nonlinear partial differential equations. 
The constant coefficients case has been studied, 
also adapting Krylov's methods, in Jakobsen
\cite{jakobsen:rate:optimal:stopping}. Krylov~\cite{krylov:rate:lipschitz:published} considers general finite difference schemes which have been already introduced in Bonnans and Zidani~\cite{bonnans:consistency} in the controlled Markov chain setting, however without establishing any rates of convergence. 

The problem of restricting partial differential equations arising in finance to bounded domains has been studied in Barles, Daher and Romano~\cite{barles:daher:romano:convergence}. 
By introducing artificial boundary condition of either Dirichlet or Neumann type on the boundary of the ball to which they restrict the domain they are able to prove convergence to the solution of the equation on the whole space.
This applies to a class of nonlinear partial differential equations to which the viscosity solutions exist.
However only in the case of linear equations with constant coefficients do they get an exponential rate of convergence.
This paper demonstrates exponential rate of convergence for a nonlinear problem (the American option price) with variable coefficients. This is proved by first obtaining a general result on the distribution of exit times of a diffusion process from a ball, see Lemma~\ref{lemma-exp-est}, and second by applying the maximum principle for the discretised equation.

The paper is organised as follows. 
In Section~\ref{s:mr} the main result is presented together with the assumptions.
In Section~\ref{s:os} the partial differential equation for the American put option price is obtained together with the rate of convergence estimate for approximations on infinite grids. In Section~\ref{sec:cyl:dom} the error arising from restricting the infinite grid to a finite grid is estimated.

\section{Main result}                                           \label{s:mr} 
Let $(\Omega,\F, \P, (\F_t)_{t\geq0})$ be a probability space
with a right-continuous filtration, such that
$\F_0$ contains all $\P$ null sets.  Let
$(W_t,\F_t)$ be a $d$-dimensional Wiener
martingale, i.e., let $(W_t)_{t \geq
  0}$ be adapted to $(\F_t)_{t \geq 0}$ and for
all $t,s \geq 0$, $W_{t+s} - W_t$ independent
of $\F_t$. 

We will consider the standard Black-Scholes model
extended to several dimensions. We consider $d$
risky assets and one risk-less asset. We assume
that we are given $\bar{\rho}=\bar{\rho}(t,x)$, a non-negative real
valued function of $t \in [0,T]$ and $x\in\R^d$, 
representing the continuously 
compounded zero coupon rate and also
$\bar{\sigma}=\bar{\sigma}(t,x)$, a $d\times d$ matrix valued function
of  $t \in [0,T]$ and $x\in\R^d$, representing volatilities 
of the risky assets and the correlations between the risky
assets.
\begin{assumptions}                                           
\label{ass:bnd} 
The functions $\bar{\rho}$ and $\bar{\sigma}$ are Borel measurable in $t$. 
There exists a positive constant $K$ such that 
  \begin{equation}
    |\bar{\sigma}| = (\tr \bar{\sigma} \bar{\sigma}^T)^{1/2} 
    \leq K \cand \bar{\rho} \leq K \on [0,T]\times \R^d.
  \end{equation}
\end{assumptions}
We denote the risky assets $S_u = (S^1_u,\ldots S^d_u)$, where $S_u = (S_u^{t,S})_{u\in [t,T]}$ is defined on $[t,T]$ by the stochastic differential equation
\begin{equation}                                              \label{eq:st:pr}
  dS_u^i = S_u^i \bar{\rho}(u,S_u) du 
    + S_u^i\sum_{j=1}^d\bar{\sigma}^{ij}(u,S_u)dW^j_u,
  \quad S^i_t = S^i, \quad i = 1,\ldots, d,
\end{equation}
where $S > 0$. It is well known (see e.g. Krylov \cite[Chapter 2, Section 5]{krylov:controlled}) that the stochastic
differential equation has a unique solution under
Assumption~\ref{ass:bnd} together with the assumption that the functions $S\mapsto S\bar{\sigma}(u,S)$ and $S\mapsto S\bar{\rho}(u,S)$ are Lipschitz continuous for all $u\in[0,T]$. 
We will use the notation $\E_{t,S}$ to denote the expectation of the expression following with the understanding that the relevant stochastic process is started from point $S$ at time $t$.
We consider the optimal stopping problem
\begin{equation}  
\label{eq:am:op:pr}                                          
  v(t,S) = \sup_{T^* \in \STtT}\E_{t,S}
  \left(e^{-\int_t^{T^*} \bar{\rho}(u,S_u) du} \bar{g}\left(S_{T^*}\right) \right),
\end{equation}
for a given Lipschitz continuous function $\bar{g}$. 
The American put option price is given by $v$. 
See e.g. Shiryaev~\cite{shiryaev:essentials}. 
In the one dimensional case $\bar{g}(S):=[K_\textrm{strike}-S]_+$ but in the multidimensional case one may want to consider a general payoff $\bar{g}$.
We wish to remove the linear growth present in the coefficients of \eqref{eq:st:pr}. 
Hence for $u \in [t,T]$, we define $x^i_u := \ln S^i_u$,
\begin{equation}                                                 \label{eq:ab}
  \bar{\beta}^i(t,S) := \bar{\rho}(t,S) - \tfrac{1}{2}
  \sum_{j=1}^d \bar{\sigma}^{ij}(t,S)^2
\end{equation}
and $\sigma(t,x) = \bar{\sigma}(t,e^x)$, $\beta(t,x) = \bar{\beta}(t,e^x)$ and
$\rho(t,x) = \bar{\rho}(t,e^x)$.
By It\^o's formula we get 
\begin{equation*}
  dx^i_u = \beta^i(u,x^i_u)du 
  + \sum_{j=1}^d\sigma^{ij}(u,x^i_u) dW^j_u, \quad x^i_t = x^i = \ln S^i, \quad i = 1,\ldots, d, \quad u\in[t,T].
\end{equation*}
Let $g(x) := \bar{g}(e^{x_i})$ and $x = \ln S$. 
Then the option value $v(t,S)$ given by \eqref{eq:am:op:pr} is equal to 
\begin{equation}                                           \label{eq:am:op:pr2}
w(t,x) = \sup_{T^* \in \ST[0,T-t]} \E_{t,x}(e^{-\int_0^{T^*}\rho(u,x_u) du }g(x_{T^*})).
\end{equation}
Let $\eta=\eta(x)$ be a smooth function and define 
\begin{equation}                                                   \label{eq:L}
  \L \eta :=  \sum_{i,j=1}^d\tfrac{1}{2}(\sigma\sigma^T)^{ij} \eta_{x^i x^j} +  \sum_{i=1}^d\beta^{i} \eta_{x^i} - \rho \eta.
\end{equation}
\begin{assumptions}                                    \label{ass-on-the-scheme}
There exist a natural number $d_1$, vectors $\ell_k \in \R^d$ and functions
\begin{equation*}
a_k:[0,T]\times \R^d \to \R, \quad b_k:[0,T]\times \R^d  \to \R, \quad k = \pm 1, \ldots, \pm d_1,
\end{equation*}
such that $\ell_k = -\ell_{-k}$, $|\ell_k| \leq K$, $a_k = a_{-k}, b_k \geq 0$ for $k = \pm 1, \ldots, \pm d_1$ and such that
\begin{equation*}
\tfrac{1}{2}(\sigma\sigma^T)^{ij}(t,x) = a_k(t,x) \ell_k^i \ell_k^j, \cand \beta^i(t,x) =  b_k(t,x) \ell_k^i
\end{equation*}
for all $k \in \{\pm 1, \ldots, \pm d_1\}$, $i,j\in\{1,\ldots,d\}$ and $(t,x) \in [0,T]\times \R^d$.
\end{assumptions}
Let $\Dl$ and $\Dl^2$ denote the first and second
derivatives in the direction of a vector $\ell$ in
$\R^d$. Notice that under this assumption $\L$ given by
\eqref{eq:L} satisfies
\begin{equation}\label{eqn-Lk}
  \L \eta = \sum_{k=\pm 1, \ldots, \pm d_1}\left(a_k \Dlk^2 \eta + b_k \Dlk \eta\right) - \rho \eta
\end{equation}
\begin{remark}
While this assumption may
appear restrictive, it turns out that this can be satisfied for any operator given by \eqref{eq:L}, such that:
\begin{enumerate}
\item We can find $\L_h \eta(x) = \sum_{y\in B} p_h(y)\eta(x+hy)$
with a finite $B \subset \R^d$ such that $\textrm{span } B = \R^d$.
\item We have $\L \eta(0) = \lim_{h\searrow 0}\L_h \eta(0)$.
\item If $\eta$ has a strict maximum at $0$ then $\L_h \eta(0) < 0$.
\end{enumerate}
For a proof see Dong and Krylov
\cite[Section 3]{dong:krylov:rate:constant}.
In the one dimensional case the construction of $a_k$ and $b_k$ is straightforward. In several dimensions in the case when $\tfrac{1}{2}\sigma \sigma^T$ is diagonally dominant see e.g. \v{S}i\v{s}ka \cite[Example 5.2.5]{siska:numerical}. See Bonnans and Zidani~\cite{bonnans:consistency} for the case when $\tfrac{1}{2}\sigma \sigma^T$ is not diagonally dominant.
\end{remark}
We will need the following regularity assumptions.
\begin{assumptions}                                  
\label{ass-holder}
For $\psi \in \{\sigma, \sqrt{a_k}, b_k, \rho$  for $k=\pm 1, \ldots, \pm d_1\}$ there exists $K>0$ such that for all $t,s \in [0,T)$ and for all $x,y \in \R^d$
\begin{equation*}
|\psi(t,x) - \psi(s,x)| \leq K|t-s|^{1/2},\quad |\psi(t,x)| \leq K. \cand |\psi(t,x) - \psi(t,y)| \leq K|x-y|.
\end{equation*}
\end{assumptions}
Notice that Assumptions~\ref{ass-on-the-scheme} and 
\ref{ass-holder} imply Assumption~\ref{ass:bnd}.
Let $h > 0$, $\tau > 0$ and $\ell \in \R^d$. 
We define $\tau_T(t) := \min(\tau, T-t)$. 
So the time step is fixed except the case $t\in(\tau,T)$ when $T-t$ is used instead of $\tau$. 
Let
\begin{equation}                                              \label{eq:fdops}
\begin{split}
\fddeltatau^T \eta(t,x) & := \frac{\eta(t+\tau_T(t),x) - \eta(t,x)}{\tau},\quad
\fddelta_{h,\ell} \eta(t,x) := \frac{\eta(t,x+h\ell) - \eta(t,x)}{h},\\
\fdDelta_{h,\ell} \eta & := - \fddelta_{h,\ell} \fddelta_{h,-\ell} \eta = \frac{1}{h}(\fddelta_{h,\ell} \eta + \fddelta_{h,-\ell}\eta)
\end{split}
\end{equation}
for $t\in[0,T), x\in \R^d$. 
Let
\begin{equation}                                                  \label{eq:Lh}
\L_h \eta := \sum_{k=\pm 1, \ldots, \pm d_1}\left(a_k \fdDeltak \eta + b_k \fddeltak \eta\right) - \rho \eta,
\end{equation}
where $a_k$ and $b_k$ are the functions from
Assumption~\ref{ass-on-the-scheme}. 
\begin{remark}
Contrary to the usual finite difference approach we have not yet introduced any grid on which the solution to the discrete problem is defined.
Typically, one first introduces a grid and then the finite difference operators acting on functions on the grid.
Here the opposite approach is taken.
First the finite difference operators are defined for any point $(t,x)$. 
Thus we will obtain a collection of disjoint problems, each centered around an arbitrary $(t_0,x_0)\in [0,T]\times \R^d$ and solved on the grid $\mt := \bmt \cap \big([0,T) \times \R^d\big)$, where 
\begin{equation*}
\begin{split}
\bar{\mathcal{M}}_T & := \{(t,x) \in [0,T]\times \R^d : (t,x) = ((t_0+j\tau) \wedge T, \\ 
&\quad \quad x_0+h(i_1\ell_1 + \cdots + i_{d_1} \ell_{d_1})), j \in \{0\} \cup \N, i_k \in \Z, k = \pm 1, \ldots, \pm d_1\}.
\end{split}
\end{equation*}
\end{remark}
The discrete problem to be solved in order to approximate the price of the American put option is
\begin{equation}                                   \label{eq:fd:scheme:american}
\begin{split}
\max\left[\fddeltatau^T \wth + \L_h \wth, g - \wth \right] & = 0 \on Q,\\
\wth & =  g \on \bmt \setminus Q,
\end{split}
\end{equation}
where $Q \subset \mt$. The solution $\wth$ to \eqref{eq:fd:scheme:american} can be defined for any point in $[0,T)\times \R^d$, since the grid $\mt$ can be centered arbitrarily.
\begin{remark}
It is worth noting the nonlinear structure of the above partial differential equation. 
If the equation was linear and non-degenerate, i.e. if, for example, on the left hand side we only had the first term of the maximum and non-degenerate $\L$, we would be in the standard situation of linear parabolic equations and we would immediately know that what the rate of convergence is, from, for example, Thom\'ee~\cite{handbookofnumericalanalysis:vol1:pt1}.
\end{remark}

Finally, we consider the localisation error. 
We will use $R > R_1 > R_2 > 0$. Let $B_R = \{x\in \R^d: |x| < R\}$. We will need to solve the discrete problem on a grid in $B_R$.
The radius $R_1$ is used for  introducing the artificial boundary conditions, while in $B_{R_2}$ we obtain the desired estimate.
Let $g_{R_1}$ be a function that is equal to $g$ inside $B_{R_1}$, zero outside $B_{R_1+1}$ and Lipschitz continuous.
Let $Q_R := ([0,T) \times B_R) \cap \mt$. 
The discrete problem that needs to be solved is 
\begin{equation}                                   \label{eq:fd:scheme:american:2}
\begin{split}
\max\left[\fddeltatau^T \wthRR + \L_h \wthRR, g - \wthRR \right] & = 0 \quad \on Q_R,\\
\wthRR & =  g_{R_1} \on \bmt \setminus Q_R.
\end{split}
\end{equation}
The following theorem is the main result of this paper. We will always use $C>0$ to denote a generic constant that is independent of  $\tau,h, R_1, R_2$ and $R$.
\begin{theorem}                                         \label{thm:american} 
Let Assumptions~\ref{ass-on-the-scheme} and~\ref{ass-holder} be satisfied. Then the system \eqref{eq:fd:scheme:american} has a unique solution for any $Q\subset \mt$ and for any $g=g(x)$ which is bounded and Lipschitz continuous in $x$.
Furthermore, let $w$ be given by \eqref{eq:am:op:pr2}.
Let $\wthRR$ denote the solution to \eqref{eq:fd:scheme:american:2}. 
Then there are constants $\mu > 0$ and $\gamma \in (0,1)$ such that
\begin{equation*}
|w - \wthRR| \leq C(e^{-\mu R_1^2 + R_2^2/2} + \tau^{1/4} + h^{1/2} + e^{\gamma(R_1 - R)}) \on [0,T] \times B_{R_2}.
\end{equation*}
\end{theorem}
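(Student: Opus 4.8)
\emph{Plan.} I would establish well-posedness first and then control the total error by a threefold triangle inequality, one mechanism for each of the four terms. Because $\tau_T(t)=\min(\tau,T-t)$ and $\wth$ is prescribed to equal $g$ on $\bmt\setminus Q$, the scheme \eqref{eq:fd:scheme:american} is solved by backward induction in time: on each layer the identity $\max[\fddeltatau^T\wth+\L_h\wth,\,g-\wth]=0$ determines $\wth(t,\cdot)$ from the already known layer $\wth(t+\tau_T(t),\cdot)$. Writing $\L_h$ as in \eqref{eq:Lh} and using $a_k=a_{-k}\ge 0$, $b_k\ge 0$, $\rho\ge 0$, the centred second difference $\fdDeltak$ contributes nonnegative off-diagonal weights and a strictly negative, diagonally dominant diagonal entry, so the resulting stationary operator is monotone. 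Monotonicity yields a discrete comparison principle, hence uniqueness and the bound $\sup|\wth|\le\sup|g|$, while existence of the bounded layer follows by monotone (Howard/Picard) iteration, the infinite spatial grid being handled by exhausting $\mt$ with finite subgrids and passing to the limit. This comparison principle is the workhorse of the localisation estimates below.

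\emph{Decomposition.} Let $w_{R_1}$ be the continuous value \eqref{eq:am:op:pr2} with $g$ replaced by the Lipschitz, compactly modified payoff $g_{R_1}$, and let $\wthRone$ solve \eqref{eq:fd:scheme:american} on the whole grid $\mt$ with obstacle $g_{R_1}$. Then
\[ |w-\wthRR|\le |w-w_{R_1}|+|w_{R_1}-\wthRone|+|\wthRone-\wthRR|, \]
and I would bound the three summands by $e^{-\mu R_1^2+R_2^2/2}$, by $\tau^{1/4}+h^{1/2}$, and by $e^{\gamma(R_1-R)}$ respectively. For the first summand, since $g=g_{R_1}$ on $B_{R_1}$ and both are bounded, sub-additivity of the supremum and $\rho\ge 0$ give, for $|x|<R_2$,
\[ |w(t,x)-w_{R_1}(t,x)|\le \sup_{T^*\in\ST[0,T-t]}\E_{t,x}\big[\,|g-g_{R_1}|(x_{T^*})\,\big]\le C\,\P_{t,x}\big[\sup_{0\le u\le T-t}|x_u|\ge R_1\big], \]
and Lemma~\ref{lemma-exp-est} bounds this exit probability by $Ce^{-\mu R_1^2+R_2^2/2}$.

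\emph{The rate.} The middle summand is the heart of the matter: $w_{R_1}$ is characterised as the solution of the degenerate variational inequality $\max[\partial_t w_{R_1}+\L w_{R_1},\,g_{R_1}-w_{R_1}]=0$ with terminal data $g_{R_1}$ (Section~\ref{s:os}), and I would prove $|w_{R_1}-\wthRone|\le C(\tau^{1/4}+h^{1/2})$ by Krylov's method of shaking the coefficients — producing shifted sub/supersolutions, mollifying them into smooth functions with controlled derivatives, inserting these into the discrete and continuous equations, and invoking consistency — combined with the discrete gradient estimate of \cite{krylov:apriori} and its optimal-stopping extension in \cite{gyongy:siska:on:the:rate}.

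\emph{Domain localisation and main obstacle.} For the last summand both $\wthRone$ and $\wthRR$ solve the same discrete obstacle problem with obstacle $g_{R_1}$ on $Q_R$, differing only in that $\wthRR$ is frozen to $g_{R_1}=0$ on $\bmt\setminus Q_R$. Since adding a nonnegative constant preserves supersolutions (again using $\rho\ge 0$), the obstacle problem is a contraction in the exterior data, so $\sup_{Q_R}|\wthRone-\wthRR|\le\sup_{|x|\ge R}\wthRone$, and it suffices to show $\wthRone$ is exponentially small far from the support $B_{R_1+1}$ of its payoff. This I would obtain from the discrete comparison principle against the supersolution $\phi(t,x)=Ce^{\lambda(T-t)}e^{\gamma(R_1-|x|)}$: the time factor compensates the positive discrete second difference of $e^{-\gamma|x|}$, so for $\gamma\in(0,1)$ and $\lambda$ large one has $\fddeltatau^T\phi+\L_h\phi\le 0$ and $\phi\ge g_{R_1}$, whence $\wthRone\le\phi$ and $\wthRone\le Ce^{\gamma(R_1-R)}$ on $\{|x|\ge R\}$; summing the three bounds proves the theorem. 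I expect the decisive difficulty to be the rate estimate of the previous paragraph: degeneracy of $\sigma\sigma^T$ forces one to work with a merely Lipschitz $w_{R_1}$, so the entire shaking/mollification apparatus and the discrete gradient estimate must be adapted to the variational-inequality structure rather than a plain Bellman equation, and the mollification scales balanced against $\tau$ and $h$ to recover exactly $\tau^{1/4}$ in time and, via the symmetry $a_k=a_{-k}$, $\ell_k=-\ell_{-k}$ that makes $\fdDeltak$ second-order consistent, $h^{1/2}$ in space.
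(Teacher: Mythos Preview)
Your proposal is correct and follows essentially the same architecture as the paper: the same threefold triangle split $|w-w_{R_1}|+|w_{R_1}-\wthRone|+|\wthRone-\wthRR|$, the exit--time bound via Lemma~\ref{lemma-exp-est} for the first term, the shaking/gradient machinery of \cite{gyongy:siska:on:the:rate} for the rate, and an exponential barrier plus discrete comparison (Lemma~\ref{lemma-discrete-comparison-principle}) for the discrete localisation. The only noteworthy deviation is your barrier $Ce^{\lambda(T-t)}e^{\gamma(R_1-|x|)}$: the paper instead uses the globally smooth $\xi(t)\,e^{\gamma(x,l)}$ for an arbitrary unit $l$ (then minimises over $l$), which makes the Taylor estimates for $\fdDeltak$ straightforward, whereas your $e^{-\gamma|x|}$ has a kink at the origin and the claim that $\lambda$ alone compensates $\L_h\phi$ uniformly in $h,\tau$ needs a little extra care near $x=0$ --- a detail, not a gap.
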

The proof will be given in Section~\ref{sec:cyl:dom}.
An outline is given below.
We start by using randomised stopping to express the payoff function of the optimal stopping problem as a payoff of an optimal control problem with unbounded reward and discounting functions. 
The payoff of this optimal control problem then corresponds to the solution of a normalised Bellman equation which we use to derive the finite difference approximation. 
Adapting results from Gy\"ongy and \v{S}i\v{s}ka~\cite{gyongy:siska:on:the:rate} gives the first main result of this paper, the rate of convergence of order $\tau^{1/4} + h^{1/2}$ for a grid on the whole space. 
We then prove an estimate the probability that a stochastic process exits a certain ball before time $T$ and use this together with a discrete comparison principle to estimate the error arising in restricting the approximation to a finite grid.

\section{Normalised Bellman equation}                                                              
\label{s:os}
This section applies known results about optimal stopping, optimal control and normalised Bellman equations to estimate the rate of convergence. 
The following theorem is a special case of Gy\"ongy and \v{S}i\v{s}ka  \cite[Theorem 3.2]{gyongy:siska:on:randomized}. It is also
proved in Krylov \cite[Chapter 3]{krylov:controlled}.
\begin{theorem}                                               \label{t:rs}
Let $\Ar_n$ contain all progressively measurable, locally integrable processes $r = (r_t)_{t\geq 0}$ taking values in $[0,n]$, such that $\int_0^\infty r_t dt = \infty$. 
Let $\Ar = \bigcup_{n \in \N} \Ar_n$. Let $g$ be a Lipschitz continuous function of $x$. 
Then for all $(t,x) \in [0,T]\times \R^d$,
\begin{equation*}
w(t,x) = \sup_{r \in \Ar} \E_{t,x} \left(\int_t^T r_sg(x_s)e^{-\int_t^s \rho(u,x_u) + r_u du}ds + g(x_T)e^{-\int_t^T \rho(u,x_u) + r_udu} \right).
  \end{equation*}
\end{theorem}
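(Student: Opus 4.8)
The plan is to establish the two inequalities $\sup_{r\in\Ar}\Phi_{t,x}(r)\le w(t,x)$ and $w(t,x)\le\sup_{r\in\Ar}\Phi_{t,x}(r)$, where $\Phi_{t,x}(r)$ denotes the expectation on the right-hand side of the statement. Throughout I use the form of $w$ in \eqref{eq:am:op:pr2} written with integration over $[t,T]$, i.e. $w(t,x)=\sup_{T^*\in\STtT}\E_{t,x}(e^{-\int_t^{T^*}\rho(u,x_u)du}g(x_{T^*}))$, equivalent after shifting time by $t$. The device linking the two values is a \emph{randomised} stopping time: on a probability space enlarged to carry an $\mathrm{Exp}(1)$ variable $\xi$ independent of $(W_s)$, put $A(s):=\int_t^s r_u\,du$ and $\theta_r:=\inf\{s\ge t:A(s)\ge\xi\}$. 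Since $r\in\Ar_n$ takes values in $[0,n]$ and $\int_0^\infty r_u\,du=\infty$, the map $A$ is continuous, nondecreasing and $A(\infty)=\infty$, so $\theta_r<\infty$ almost surely; moreover $\theta_r$ is measurable with respect to $\F_s\vee\sigma(\xi)$ and, once $\xi$ is frozen, reduces to an $(\F_s)$-stopping time.

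The computational core is the identity, valid for every $r\in\Ar$,
\begin{equation*}
\E_{t,x}\big(e^{-\int_t^{\theta_r\wedge T}\rho(u,x_u)du}\,g(x_{\theta_r\wedge T})\big)=\Phi_{t,x}(r).
\end{equation*}
I would prove it by conditioning on the path $(x_u)_{u\ge t}$. Given the path, $\theta_r$ has conditional survival function $\P(\theta_r>s\mid\text{path})=\P(\xi>A(s))=e^{-A(s)}$, and because $r$ is locally integrable, $A$ is absolutely continuous with $A'=r$ a.e., so the conditional law of $\theta_r$ on $(t,\infty)$ has density $r_s e^{-A(s)}$. Splitting the payoff over $\{\theta_r\le T\}$ and $\{\theta_r>T\}$ and using $\P(\theta_r>T\mid\text{path})=e^{-A(T)}$ gives, conditionally, $\int_t^T r_s g(x_s)e^{-\int_t^s\rho\,du}e^{-A(s)}ds+g(x_T)e^{-\int_t^T\rho\,du}e^{-A(T)}$; taking expectations over the path and recombining $e^{-\int_t^s\rho}e^{-A(s)}=e^{-\int_t^s(\rho+r_u)du}$ recovers $\Phi_{t,x}(r)$. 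The only delicate point is the Lebesgue--Stieltjes bookkeeping for $A$, which is clean here since $A$ is continuous and has no atoms.

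For $\sup_r\Phi\le w$, fix $r\in\Ar$ and condition on $\xi$. Given $\xi=c$, the time $\theta_r^{(c)}:=\inf\{s\ge t:A(s)\ge c\}$ is an $(\F_s)$-stopping time, so $\theta_r^{(c)}\wedge T\in\STtT$ and its discounted payoff is $\le w(t,x)$. Averaging over $c$ by the tower property together with the identity above yields $\Phi_{t,x}(r)\le w(t,x)$, and taking the supremum over $r$ gives the inequality. This route is attractive because it never requires redefining $w$ as a supremum over randomised stopping times: each randomised payoff is exhibited directly as an average of ordinary ones.

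For $w\le\sup_r\Phi$, fix $T^*\in\STtT$ and take the rates $r^{(n)}_s:=n\,\one_{\{s\ge T^*\}}\in\Ar_n$. Then $A^{(n)}(s)=n(s-T^*)_+$ and $\theta_{r^{(n)}}=T^*+\xi/n\to T^*$ almost surely, so $\theta_{r^{(n)}}\wedge T\to T^*$ (using $T^*\le T$). By continuity of the paths and of $g$ and the bound $e^{-\int_t^{\cdot}\rho}\le1$ coming from $\rho\ge0$, the discounted payoffs converge almost surely; the integrable majorant $C(1+\sup_{s\in[t,T]}|x_s|)$, available since $\beta$ and $\sigma$ are bounded (Assumption~\ref{ass-holder}) so $x$ has finite moments on $[t,T]$, lets me pass to the limit by dominated convergence. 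Hence $\E_{t,x}(e^{-\int_t^{T^*}\rho\,du}g(x_{T^*}))=\lim_n\Phi_{t,x}(r^{(n)})\le\sup_r\Phi_{t,x}(r)$, and a supremum over $T^*$ finishes the argument. I expect the main obstacle to lie not in any single estimate but in the careful measure-theoretic construction of $\theta_r$ and the uniform integrability needed to justify this limit passage.
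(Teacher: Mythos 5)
Your proposal is correct, but note first that the paper itself contains no proof of Theorem~\ref{t:rs}: it is invoked as a special case of Gy\"ongy and \v{S}i\v{s}ka's randomized stopping theorem and of Krylov's treatment in \emph{Controlled Diffusion Processes}, Chapter~3. Your blind argument is essentially a self-contained reconstruction of the classical proof behind those citations, and it holds up. The exponential-clock identity is sound: conditionally on the path, $\theta_r$ has survival function $e^{-A(s)}$ and density $r_se^{-A(s)}$ on $(t,\infty)$, and splitting at $T$ recovers $\Phi_{t,x}(r)$; in fact this identity does not even need $\int_0^\infty r_u\,du=\infty$, since you truncate at $T$ and the mass $e^{-A(T)}$ sits on $\{\theta_r>T\}$ regardless — that condition matters only for $\theta_r<\infty$. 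Your first inequality via freezing $\xi=c$ is exactly the mixture representation; equivalently one can dispense with the auxiliary variable entirely through the deterministic change of variables $c=A(s)$, writing the randomised payoff as $\int_0^\infty e^{-c}\,\big(\text{payoff at } \theta^{(c)}\wedge T\big)\,dc$ with each $\theta^{(c)}\wedge T\in\STtT$, which is how the cited sources phrase it; this is the cleaner route since it never leaves the original probability space. For the reverse inequality, your choice $r^{(n)}_s=n\one_{\{s\ge T^*\}}$ is admissible in $\Ar_n$ — worth stating explicitly that the indicator process is adapted with right-continuous paths, hence progressively measurable — and the dominated-convergence passage is justified: $g$ Lipschitz together with bounded $\sigma$ and $\beta$ gives the integrable majorant $C\big(1+\sup_{s\in[t,T]}|x_s|\big)$, and on the event $\{T^*=T\}$ the boundary term $e^{-n(T-T^*)}$ persists and yields exactly the stopped payoff, so the limit is correct on both events. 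What the paper's citation buys over your argument is generality — the Gy\"ongy--\v{S}i\v{s}ka theorem covers non-Markovian reward functionals under weaker hypotheses — while your proof buys self-containedness at the modest cost of the measure-theoretic bookkeeping you already identified.
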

\begin{remark}
\label{rem:normBellmaxPde}
Let $\L$ be the differential operator given by
\eqref{eq:L}. Krylov \cite[Theorem 6.3.3]{krylov:controlled} proves that $w$ is the unique solution of the following normalised Bellman equation
\begin{align}
\label{eq:4}
\sup_{r \in R^+} \left( \tfrac{1}{1+r} \left(w_t + \L w\right) + \tfrac{r}{1+r}(g-w) \right) & = 0 \quad \quad \on [0,T)\times \R^d,\\
 w(T,x) & = g(x) \cfor x \in \R^d.
\end{align}
Let $\e =
\tfrac{1}{1+r}$ and use this in \eqref{eq:4}. 
The supremum is now taken over $\e \in
[0,1]$ and hence
\begin{equation*}
\sup_{\e \in [0,1]} \left( \e\left(w_t + \L w\right) + (1-\e)(g-w) \right) = 0 \on [0,T)\times \R^d,
\end{equation*}
Noticing that for any real numbers $p$ and $q$, 
\begin{equation*}
  \sup_{\e \in [0,1]}(\e p + (1-\e)q) = \max(p,q)
\end{equation*}
we obtain that \eqref{eq:4} is equivalent to 
\begin{equation*}
  \max\left[w_t + \L w, g-w\right] = 0\on [0,T)\times \R^d.
\end{equation*}
The reader will immediately recognise that this is 
exactly the nonlinear equation of which \eqref{eq:fd:scheme:american}
is the finite difference approximation.  
\end{remark}

\begin{lemma}                                           \label{l:u}
Let $Q$ be a subset of $\mt$ and $g$ be a bounded and Lipschitz continuous function. Let Assumptions~\ref{ass:bnd}, ~\ref{ass-on-the-scheme} and~\ref{ass-holder} be satisfied.
Then \eqref{eq:fd:scheme:american} has a unique solution.
\end{lemma}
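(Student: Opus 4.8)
The plan is to exploit the implicit-in-time, monotone structure of the scheme and to solve it by backward induction on the finitely many time levels present in $\bmt$, reducing each step to a contraction mapping on the space of bounded functions on the spatial grid. At a point $(t,x)$ the operator $\fddeltatau^T \wth + \L_h \wth$ couples $\wth(t,x)$ and its spatial neighbours $\wth(t,x\pm h\ell_k)$, all at time $t$, to the single value $\wth(t+\taut,x)$ at the next time level, so the scheme is fully implicit. The time coordinates in $\bmt$ are of the form $(t_0+j\tau)\wedge T$, so there are only finitely many levels below $T$, and on the top level, which lies in $\bmt\setminus Q$, we have $\wth=g$. Hence it suffices to show that, given the bounded values $f:=\wth(t+\taut,\cdot)$ already determined on the level above, the spatial problem on level $t$ has a unique bounded solution $u:=\wth(t,\cdot)$.

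Next I rewrite $\L_h$ at time $t$ in the form $\L_h u(x)=\sum_{y}c_{xy}\,u(y)-c_x u(x)$, where $y$ ranges over the spatial neighbours $x\pm h\ell_k$. Using $a_k=a_{-k}\geq0$, $b_k\geq0$ and the definitions \eqref{eq:fdops}, every off-diagonal weight $c_{xy}$ is nonnegative, while applying $\L_h$ to the constant $\one$ gives $c_x=\sum_y c_{xy}+\rho\geq0$. Setting $\lambda:=\tfrac1\tau+c_x>0$ and using that for $\lambda>0$ and any $F$ not depending on $u(x)$ one has $\max[F-\lambda u,\,g-u]=0$ if and only if $u=\max[g,F/\lambda]$, the level-$t$ equation at the points with $(t,x)\in Q$ becomes the fixed-point relation
\begin{equation*}
u(x)=\max\!\left[g(x),\ \tfrac1\lambda\Big(\tfrac{f(x)}{\tau}+\sum_{y}c_{xy}\,u(y)\Big)\right].
\end{equation*}
This defines a map $\Phi$ on bounded functions on the level-$t$ grid, with the prescribed values $u=g$ substituted at the neighbours lying in $\mt\setminus Q$.

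Finally I check that $\Phi$ is a contraction on $\ell^\infty$. Since $\max(g,\cdot)$ is $1$-Lipschitz and $c_{xy}\geq0$,
\begin{equation*}
|\Phi(u)(x)-\Phi(v)(x)|\leq\tfrac1\lambda\sum_y c_{xy}\,|u(y)-v(y)|\leq\frac{c_x-\rho}{\tfrac1\tau+c_x}\,\|u-v\|_\infty.
\end{equation*}
By Assumption~\ref{ass-holder} the coefficients $a_k,b_k,\rho$ are bounded in terms of $K$ alone, so $c_x\leq C_0$ for a constant $C_0=C_0(h,K)$ independent of $x$ and $t$; hence the contraction constant is at most $C_0/(\tfrac1\tau+C_0)<1$, uniformly in $x$. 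Banach's fixed-point theorem then gives a unique bounded $u$, and the same estimate shows $\Phi$ maps the ball of radius $\max(\|g\|_\infty,\|f\|_\infty)$ into itself, so boundedness propagates through the induction. Assembling the levels yields the unique bounded solution $\wth$ of \eqref{eq:fd:scheme:american}.

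The main point to get right is that each time level is an infinite nonlinear system, not a finite one, so everything hinges on a contraction estimate that is uniform in $x$; this is exactly what the nonnegativity of the off-diagonal weights ($a_k,b_k\geq0$) together with their uniform boundedness deliver. It is equally essential to iterate level by level rather than attempt a single global contraction: the per-level factor $(c_x-\rho)/(\tfrac1\tau+c_x)$ is strictly below $1$ precisely because the $\tfrac1\tau$ produced by the already-known future value has been removed from the coupling, whereas a global map would contract only by $1-\rho/(\tfrac1\tau+c_x)$, which degenerates wherever $\rho$ vanishes.
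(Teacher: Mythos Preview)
Your argument is correct and self-contained; it gives existence and uniqueness in $\ell^\infty$ on each time level via a Banach contraction, and the backward induction assembles a unique bounded solution of~\eqref{eq:fd:scheme:american}. Two small remarks: the nonnegativity $a_k\geq 0$ you use is not stated in Assumption~\ref{ass-on-the-scheme} but follows from Assumption~\ref{ass-holder} (since $\sqrt{a_k}$ appears there); and your uniqueness is in the class of bounded solutions, which is the natural class here and is all that the rest of the paper requires.

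The paper takes a different route: it rewrites the max-equation \eqref{eq:fd:scheme:american} in normalised Bellman form
\[
\sup_{r\geq 0}\tfrac{1}{1+r}\big(\fddeltatau^T \wth + \L_h \wth - r\wth + rg\big)=0
\]
(the same passage $\e=\tfrac1{1+r}$ as in Remark~\ref{rem:normBellmaxPde}) and then checks that the hypotheses of Gy\"ongy--\v{S}i\v{s}ka~\cite[Theorem~3.4]{gyongy:siska:on:the:rate} are met, so that existence/uniqueness follows from that reference. What your approach buys is transparency: it exposes directly the monotonicity of the scheme (nonnegative off-diagonal weights) and isolates the exact mechanism making each implicit time step solvable, without invoking external machinery. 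What the paper's approach buys is economy and coherence with the next step: the normalised Bellman reformulation is precisely the framework of~\cite{gyongy:siska:on:the:rate}, so the same verification of hypotheses immediately feeds into the rate-of-convergence result (Theorem~\ref{t:rate}) via \cite[Theorem~2.4]{gyongy:siska:on:the:rate}; the well-posedness lemma is then just a by-product of that reduction.
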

\begin{proof}
We wish to transform \eqref{eq:fd:scheme:american} into the same form as it 
appears in Gy\"ongy and \v{S}i\v{s}ka~\cite{gyongy:siska:on:the:rate}.
Using the same argument as in Remark~\ref{rem:normBellmaxPde} we can see that the first equation in \eqref{eq:fd:scheme:american} is equivalent to 
\begin{equation}                                             \label{eq:5.4}
\sup_{r\in [0,\infty)}\frac{1}{1+r}\left(\fddeltatau^T \wth + \L_h \wth -r\wth + rg \right) = 0 \on Q.
\end{equation}
Now we just need to check that the assumptions required
by Gy\"ongy and \v{S}i\v{s}ka \cite[Theorem 3.4]{gyongy:siska:on:the:rate}
are  satisfied in our case. The space of control parameters
is $[0,\infty)$. The normalising factor $m$ is $1/(1+r)$. 
Assumption 3.1 of~\cite{gyongy:siska:on:the:rate} is
 satisfied due to Assumption~\ref{ass-on-the-scheme}. 
The discounting function $c$ is $\rho + r$ and the reward 
function $f$ is $rg$. Due to Assumption~\ref{ass:bnd} we have
\begin{equation*}
  |\tfrac{1}{1+r} a_k| + |\tfrac{1}{1+r}b_k| 
+ |\tfrac{1}{1+r}(\rho +r)| + |\tfrac{1}{1+r} rg| \leq C.
\end{equation*}
Thus~\cite[Assumption 3.2]{gyongy:siska:on:the:rate} is satisfied. 
Finally $\tfrac{1}{1+r}(1+\rho+r)\geq 1,$ since $\rho\geq 0$. 
Hence \cite[Assumption~3.3]{gyongy:siska:on:the:rate} is satisfied and the result follows from~\cite[Theorem~3.4]{gyongy:siska:on:the:rate}.
\end{proof}

\begin{theorem}                                        
\label{t:rate}
Let $w$ be the American put option price given by \eqref{eq:am:op:pr}. Let $Q = \mt$ and let $\wth$ be the solution of \eqref{eq:fd:scheme:american}. 
Then
\begin{equation*}
|w - \wth| \leq C(\tau^{1/4} + h^{1/2}) \on \mt.
\end{equation*}
\end{theorem}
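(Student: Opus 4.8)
The plan is to recognise the discrete problem \eqref{eq:fd:scheme:american} as a finite difference scheme for a normalised Bellman equation, and then invoke the rate-of-convergence machinery of Gy\"ongy and \v{S}i\v{s}ka~\cite{gyongy:siska:on:the:rate} essentially as a black box, after checking that all its structural and regularity hypotheses hold in our setting. By Theorem~\ref{t:rs} and Remark~\ref{rem:normBellmaxPde}, the continuous payoff $w$ given by \eqref{eq:am:op:pr2} is the unique solution of the normalised Bellman equation associated with the optimal control problem whose control set is $[0,\infty)$, whose normalising factor is $m(r)=1/(1+r)$, whose discounting function is $c=\rho+r$, and whose reward is $f=rg$. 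As shown in the proof of Lemma~\ref{l:u}, the discrete equation \eqref{eq:fd:scheme:american} is, via the identity $\sup_{\e\in[0,1]}(\e p + (1-\e)q)=\max(p,q)$, exactly the finite difference discretisation \eqref{eq:5.4} of this same Bellman equation. Thus $w$ and $\wth$ are the continuous solution and its finite difference approximation for one and the same degenerate Bellman equation, and the theorem is the assertion that the approximation converges at rate $\tau^{1/4}+h^{1/2}$.

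The key steps I would carry out are as follows. First I would verify that the control-problem data $(\La, b_k, \rho, r)$ meet the coefficient hypotheses of~\cite{gyongy:siska:on:the:rate}: the decomposition $\tfrac12(\sigma\sigma^T)^{ij}=a_k\ell_k^i\ell_k^j$ and $\beta^i=b_k\ell_k^i$ supplied by Assumption~\ref{ass-on-the-scheme} realises $\L_h$ in the admissible monotone form \eqref{eq:Lh}, and Assumption~\ref{ass-holder} furnishes precisely the $1/2$-H\"older continuity in time together with Lipschitz continuity and boundedness in space of $\sqrt{a_k}, b_k, \rho$ (and $\sigma$) that the cited theorem demands. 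Second I would check the two normalisation conditions already confirmed in Lemma~\ref{l:u}: the uniform bound $|m a_k|+|m b_k|+|m(\rho+r)|+|m\,rg|\le C$ and the nondegeneracy $m(1+\rho+r)=\tfrac{1}{1+r}(1+\rho+r)\ge 1$, which uses $\rho\ge 0$. Third, the boundedness and Lipschitz continuity of $g$, combined with the H\"older-in-time/Lipschitz-in-space regularity of the coefficients, gives the regularity of $w$ needed to apply the rate estimate. With all hypotheses in place, the theorem of~\cite{gyongy:siska:on:the:rate} yields the bound $|w-\wth|\le C(\tau^{1/4}+h^{1/2})$ on the grid $\mt$ directly.

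The step I expect to be the main obstacle is not any single estimate but the faithful matching of our reformulated problem to the framework of~\cite{gyongy:siska:on:the:rate}: one must confirm that the unbounded control set $[0,\infty)$, the unbounded reward $f=rg$, and the unbounded discounting $c=\rho+r$ are genuinely admissible there once multiplied by the normalising factor $m=1/(1+r)$, so that the supremum in \eqref{eq:5.4} falls within the scope of the cited theorem. The delicate point is that the degeneracy of the diffusion, which forces $w$ to be only Lipschitz in space rather than $C^2$, is exactly what limits the attainable rate to $h^{1/2}$ (rather than the $h$ obtainable in the nondegenerate case of Hu et.~al.~\cite{hu:optimal}); I would emphasise that the cited rate estimate is precisely designed to accommodate this low spatial regularity, so that no additional smoothness of $w$ need be assumed. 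Everything else reduces to the routine verification, already begun in Lemma~\ref{l:u}, that Assumptions~\ref{ass-on-the-scheme} and~\ref{ass-holder} translate into the numbered assumptions of~\cite{gyongy:siska:on:the:rate}.
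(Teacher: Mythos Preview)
Your proposal is correct and follows essentially the same route as the paper: rewrite \eqref{eq:fd:scheme:american} in the normalised Bellman form \eqref{eq:5.4}, verify that Assumptions~\ref{ass-on-the-scheme} and~\ref{ass-holder} (together with the normalisation checks already done in Lemma~\ref{l:u}) translate into the hypotheses of~\cite{gyongy:siska:on:the:rate}, and then invoke the rate-of-convergence theorem there as a black box. The paper's proof is shorter but does exactly this, citing specifically \cite[Theorem~2.4]{gyongy:siska:on:the:rate} once \cite[Assumptions~2.3--2.5]{gyongy:siska:on:the:rate} are confirmed.
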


\begin{proof}
As before we consider the first equation in \eqref{eq:fd:scheme:american}
rewritten as \eqref{eq:5.4}. We now just need to check 
that~\cite[Assumptions 2.3 through 2.5]{gyongy:siska:on:the:rate}
hold, so that we can apply~\cite[Theorem 2.4]{gyongy:siska:on:the:rate}. Notice that~\cite[Assumption 2.3]{gyongy:siska:on:the:rate}
is equivalent to Assumption~\ref{ass-on-the-scheme} here and \cite[Assumption 2.4]{gyongy:siska:on:the:rate} is satisfied due to  Assumptions~\ref{ass:bnd} here together with
and~\ref{ass-holder}. Finally~\cite[Assumption 2.5]{gyongy:siska:on:the:rate} is satisfied thanks to
our Assumption~\ref{ass-holder}. Hence we get the desired rate
of convergence.
\end{proof}

\section{Approximations in cylindrical domains}      \label{sec:cyl:dom}
In this section we estimate the error arising in the
localisation. To this end we prove a general result
on the distribution of the exit time of diffusion processes form
balls of some radius $R$. 
This, together with a comparison type theorem for the finite difference schemes from~\cite{gyongy:siska:on:the:rate} allows us to estimate the localisation error.

\begin{lemma}\label{lemma-exp-est} 
Let $W_t$ be a $d'$-dimensional Wiener martingale. Let $\xi$ be a $\F_0$ measurable, $\R^d$ valued random variable. 
Let $(\sigma_t)_{t\in [0,T]}$ and $(\beta_t)_{t\in [0,T]}$ be $(\F_t)_{t\in [0,T]}$ progressively measurable processes, where $\sigma_t$ is a $d\times d'$ dimensional matrix and $\beta_t$ is a $d'$-dimensional vector such that $\int_0^T |\beta_s| ds < \infty$ and $\E \int_0^T |\sigma_s|^2 ds < \infty$.
Let $(x_t)_{t \in [0,T]}$ be given by
\begin{equation*} 
dx_t = \beta_t dt + \sigma_t dW_t,\quad x_0 = \xi.
\end{equation*}
Let $a_s = \tfrac{1}{2}\sigma_s \sigma_s^T$. Let $(\cdot, \cdot)$ denote the inner product in $\R^d$. 
If for all $t \in [0,T]$ 
\begin{equation}\label{eq:exp:est:ass}
2x_t\beta_t + |\sigma_t|^2 + (a_t x_t,x_t) \leq K(1+|x_t|^2)\quad \P-\textrm{almost surely}
\end{equation}
and if $\E e^{\xi^2} < \infty$, then there exists $\mu > 0$, depending only on $K$ and $T$, such that
\begin{equation}\label{eq:exp:est}
\E \sup_{t \in [0,T]} e^{\mu x_t^2}\leq 3(1+\E e^{\xi^2/2}).
\end{equation}
\end{lemma}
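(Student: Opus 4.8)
The plan is to apply It\^o's formula to the function $\phi(t,x) = e^{\lambda(t)|x|^2}$ for a suitable positive, decreasing profile $\lambda$, chosen so that condition \eqref{eq:exp:est:ass} absorbs precisely the second-order and drift contributions of the It\^o expansion. Because the stochastic integrals need not be genuine martingales a priori, I would first localise by the stopping times $\tau_n = \inf\{t : |x_t| \ge n\}\wedge T$, carry out all the estimates on $[0,\tau_n]$ (where $\phi$ is bounded and $\tilde m$ below is a true $L^2$ martingale, using $\E\int_0^T|\sigma_s|^2\,ds < \infty$), and pass to the limit at the end by monotone convergence and Fatou's lemma, noting $\tau_n \uparrow T$ almost surely by path continuity.

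For the core computation, with $a_t = \tfrac12\sigma_t\sigma_t^T$ so that $\tr a_t = \tfrac12|\sigma_t|^2$, It\^o's formula gives
\begin{equation*}
d\phi(t,x_t) = \phi(t,x_t)\big[\lambda'(t)|x_t|^2 + \lambda(t)\big(2(x_t,\beta_t) + |\sigma_t|^2 + 4\lambda(t)(a_t x_t, x_t)\big)\big]\,dt + dm_t,
\end{equation*}
where $m_t$ is a continuous local martingale. The key point is that if $\lambda(t)\le 1/4$ then $4\lambda(t)(a_t x_t, x_t) \le (a_t x_t, x_t)$, since $a_t$ is positive semidefinite, so the bracket is dominated via \eqref{eq:exp:est:ass} by $(\lambda'+\lambda K)|x_t|^2 + \lambda K$. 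Choosing $\lambda(t) = \lambda_0 e^{-Kt}$ with $\lambda_0 \le 1/4$ makes $\lambda'+\lambda K = 0$, whence the drift is at most $\lambda_0 K\,\phi(t,x_t)$; consequently $Y_t := e^{-\lambda_0 Kt}\phi(t,x_t)$ has nonpositive drift and is a nonnegative supermartingale. Setting $\mu := \lambda(T) = \lambda_0 e^{-KT}$ we have $e^{\mu|x_t|^2}\le \phi(t,x_t)$ for all $t\in[0,T]$, so it suffices to estimate $\E\sup_t\phi(t,x_t)$.

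The remaining and genuinely delicate step is to move the supremum inside the expectation with the stated constant. Writing the supermartingale as $Y_t = Y_0 + \tilde m_t - A_t$ with $A$ nondecreasing and $A_0 = 0$ gives $\sup_t Y_t \le Y_0 + \sup_t\tilde m_t$, and hence $\sup_t\phi(t,x_t)\le e^{\lambda_0 KT}(Y_0 + \sup_t\tilde m_t)$. I would bound $\E\sup_t\tilde m_t$ by Doob's $L^2$ inequality, reducing matters to $(\E\langle\tilde m\rangle_T)^{1/2}$, where $\langle\tilde m\rangle_T = \int_0^T e^{-2\lambda_0 Ks}\,8\lambda(s)^2\phi(s,x_s)^2(a_s x_s, x_s)\,ds$. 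Since $\phi^2 = e^{2\lambda|x|^2}$ carries twice the exponent of $\phi$, this term is controlled by running the same supermartingale argument a second time at the doubled profile $2\lambda$ (admissible since $2\lambda_0 = 1/2 \le 1/4$ fails only at the top, so one level of doubling from $\lambda_0=1/4$ lands at exponent $1/2$), which bounds the doubled-exponent moment by $\E e^{2\lambda_0|\xi|^2} = \E e^{|\xi|^2/2}$; the hypothesis $\E e^{|\xi|^2} < \infty$ is exactly what keeps every quantity in this cascade finite. Combining the Doob bound with this companion estimate and absorbing, via Young's inequality, the copy of $\E\sup_t\phi$ that reappears from the quadratic variation yields \eqref{eq:exp:est}, the numerical factor $3$ emerging from the initial term together with the Doob and absorption constants.

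I expect the main obstacle to be precisely this last step: controlling $\E\sup_t\tilde m_t$ without circularity, since its quadratic variation is governed by the very exponential moment being estimated. The resolution is the doubling/absorption mechanism above, legitimised by the localisation (guaranteeing all quantities are finite before the limit) and by the factor-two gap between the exponents $1$ and $1/2$ that is deliberately built into the statement. By contrast, the It\^o expansion and the sign analysis of the drift through \eqref{eq:exp:est:ass} are routine once the profile $\lambda(t) = \lambda_0 e^{-Kt}$ has been identified.
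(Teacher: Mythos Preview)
Your It\^o expansion and the choice of a decaying profile $\lambda(t)=\lambda_0 e^{-Kt}$ are correct and essentially coincide with the paper's: the paper also works with $\psi_t=\exp(e^{-\lambda t}|x_t|^2)$, uses the assumption \eqref{eq:exp:est:ass} to make the drift nonpositive (after subtracting a linear rate), and localises by exit times from balls. The divergence is entirely in the ``delicate step'' of putting the supremum inside the expectation.

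The paper does \emph{not} go through the quadratic variation at all. Having established the stopping-time estimate $\E[\one_A\psi_{T^*}]\le\E[\one_A(1+e^{|\xi|^2})]$ for every stopping time $T^*\le T$ and every $A\in\F_0$, it quotes a lemma of Gy\"ongy--Krylov which, from such a bound, yields directly
\[
\E\sup_{t\le T}\psi_t^{\delta}\le\frac{2-\delta}{1-\delta}\bigl(1+\E e^{\delta|\xi|^2}\bigr)\qquad(0<\delta<1),
\]
and then takes $\delta=\tfrac12$. This is where both the constant $3=\tfrac{2-1/2}{1-1/2}$ and the halving of the exponent (from $e^{|\xi|^2}$ to $e^{|\xi|^2/2}$) come from, in one stroke.

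Your Doob/BDG route, by contrast, has a genuine gap. The quadratic variation you write down is $\langle\tilde m\rangle_T=\int_0^T 8e^{-2\lambda_0Ks}\lambda(s)^2\phi(s,x_s)^2\,(a_sx_s,x_s)\,ds$, and you then need a bound on $(a_sx_s,x_s)$ by itself. But hypothesis \eqref{eq:exp:est:ass} controls only the \emph{sum} $2(x_s,\beta_s)+|\sigma_s|^2+(a_sx_s,x_s)$; since $2(x_s,\beta_s)$ can be arbitrarily negative, there is no bound of the form $(a_sx_s,x_s)\le C(1+|x_s|^2)$, and the lemma does not assume $|\sigma_s|\le K$ pointwise. ``Running the supermartingale at the doubled profile $2\lambda$'' only bounds $\E\phi_t^2$ at fixed times (or along stopping times); it does not bound $\E\int_0^T\phi_s^2(a_sx_s,x_s)\,ds$, because the increasing part of the doubled-profile Doob--Meyer decomposition again mixes $(a_sx_s,x_s)$ with the other drift terms and cannot be disentangled. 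You already notice that $2\lambda_0=\tfrac12$ violates the constraint $4\cdot(2\lambda_0)\le1$ needed for your drift estimate, so the doubled profile is not even a supermartingale by your mechanism; shrinking $\lambda_0$ does not remove the first obstruction. The absorption you sketch therefore does not close, and neither the constant $3$ nor the exponent $\tfrac12$ falls out of it. Replacing this step by the Gy\"ongy--Krylov lemma (or, equivalently, by the $p<1$ maximal inequality for nonnegative supermartingales) fixes the argument cleanly.
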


\begin{proof} 
We begin by applying It\^o's formula to the process $x_t$. Thus
\begin{equation*}
dx_t^2 = 2x_t \sigma_t dW_t + (2(x_t, \beta_t) + |\sigma_t|^2)dt
\end{equation*}
and hence
\begin{equation*}
\begin{split}
de^{x_t^2} = & e^{x_t^2}\big[2x_t \sigma_t dW_t + (2(x_t,\beta_t) + |\sigma_t|^2)dt+ 2(a_t x_t,x_t)dt\big].
\end{split}
\end{equation*}
Let $\psi_t := \exp(e^{-\lambda t} x_t^2)$, where $\lambda > 0$ is a constant to be chosen later. 
Then
\begin{equation*}
\begin{split} 
d\psi_t = e^{-\lambda t}\psi_t \big[&2x_t \sigma_t dW_t + (2(x_t, \beta_t) + |\sigma_t|^2 + (a_t x_t,x_t))dt - \lambda x_t^2 dt \big].
\end{split}
\end{equation*} 
Using~\eqref{eq:exp:est:ass} we see that
\begin{equation*}
\psi_t \leq \psi_0 + \int_0^t 2e^{-\lambda s}\psi_sx_s\sigma_sdW_s + \int_0^t e^{-\lambda s}\psi_s\left(K(1+x_s^2) - \lambda x_s^2 \right)ds.
\end{equation*} 
Let $T^*_R$ denote the exist time of the process from a ball of radius $R$. Since stochastic integrals are local martingales we have
\begin{equation*}
\E \one_A \psi_{T^*}  \leq  \E \one_A  \psi_0  + \E \one_A  \int_0^{T^*} e^{-\lambda s}\psi_s\left(K(1+x_s^2) - \lambda  x_s^2 \right)ds =:\E \one_A \psi_0 + I
\end{equation*} 
 for any stopping time $T^* \leq T^*_R$ and any $A\in \F_0$. But letting $R\to \infty$ we get the above inequality for any ${T^*} \leq T$. We see that $I$ is less then or equal to
\begin{equation*}
\begin{split}
\E \one_A  \Bigg[  \int_0^{T^*} \one_{\{x_s^2<1\}}e^{-\lambda s}\psi_s\left(K(1+x_s^2) - \lambda x_s^2 \right)ds
+ \int_0^{T^*} \one_{\{x_s^2 \geq 1\}}e^{-\lambda s}\psi_s\left(K(1+x_s^2) - \lambda x_s^2 \right)ds \Bigg].
\end{split}
\end{equation*} 
  Choose $\lambda > 0$
  large such that, $2K \leq \lambda$. Then
  \begin{equation*}
      I \leq \E \one_A \int_0^{T^*}
      \one_{\{x_s^2<1\}}e^{-\lambda
        s}\exp(e^{-\lambda
        s}x_s^2)\left(K(1+x_s^2) - \lambda
        x_s^2 \right)ds \leq \E \one_A.
  \end{equation*}  
Hence for any stopping time ${T^*} \leq T$ we have
\begin{equation*} 
\E \one_A \exp(e^{-\lambda {T^*}}x_{T^*}^2) \leq \E \one_A (1 +  \exp(\xi^2)).
\end{equation*} 
Then, due to Gy\"ongy and Krylov \cite[Lemma 3.2]{gyongy:krylov:on}, for any $\delta \in (0,1)$
\begin{equation*}
    \E \sup_{t \in [0,T]}\exp(\delta
    e^{-\lambda t}x_t^2) \leq \tfrac{2-\delta}{1-\delta}(1
    + \E \exp(\delta \xi^2)).
  \end{equation*}
Take $\delta = 1/2$. Let $\mu = \exp(-\lambda T)/2$. Then we see that 
 \begin{equation*}
    \E \sup_{t \in [0,T]}\exp(\mu x_t^2) \leq 3\left(1
    + \E \exp\left(\frac{1}{2} \xi^2\right)\right)
  \end{equation*}
and this completes the proof.
\end{proof}

\begin{corollary}\label{corollary-ptaur}
Let $T^*_R := \inf\{t\geq 0:|x_t| \geq R\}$. If the assumptions
of Lemma~\ref{lemma-exp-est} are satisfied then there exists $\mu >0$ such that
\begin{equation*}
\P(T^*_R \leq T) \leq 3e^{-\mu R^2}(1+\E e^{\xi^2/2}).
\end{equation*}
\end{corollary}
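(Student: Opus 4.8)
The plan is to read off the tail bound directly from the exponential moment estimate \eqref{eq:exp:est} of Lemma~\ref{lemma-exp-est} by a single application of Markov's inequality; in particular the constant $\mu$ produced by that lemma is exactly the one appearing in the statement, so no new constant needs to be constructed. The only genuine content is to rephrase the exit event $\{T^*_R \leq T\}$ as a sublevel statement for $\sup_{t\in[0,T]} x_t^2$, after which the estimate is immediate.

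First I would use path continuity of $x_t$: on the event $\{T^*_R \leq T\}$ the exit time is finite and the process attains $|x_{T^*_R}| \geq R$ at that time, so that $\sup_{t\in[0,T]} x_t^2 \geq R^2$. This gives the inclusion $\{T^*_R \leq T\} \subseteq \{\sup_{t\in[0,T]} x_t^2 \geq R^2\}$, and since $u \mapsto e^{\mu u}$ is increasing, the right-hand event coincides with $\{\sup_{t\in[0,T]} e^{\mu x_t^2} \geq e^{\mu R^2}\}$ for the $\mu>0$ furnished by Lemma~\ref{lemma-exp-est}.

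I would then apply Markov's inequality to the nonnegative random variable $\sup_{t\in[0,T]} e^{\mu x_t^2}$, obtaining
\begin{equation*}
\P(T^*_R \leq T) \leq e^{-\mu R^2}\, \E \sup_{t\in[0,T]} e^{\mu x_t^2},
\end{equation*}
and substituting the bound \eqref{eq:exp:est} yields $\P(T^*_R \leq T) \leq 3 e^{-\mu R^2}(1 + \E e^{\xi^2/2})$, which is the claim.

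There is no serious obstacle here, as the result is a direct corollary of the preceding lemma. The only points deserving a word of care are the measurability of the running supremum and the path-continuity argument underlying the event inclusion; both follow at once from the hypotheses of Lemma~\ref{lemma-exp-est}, under which $x_t$ is a continuous semimartingale.
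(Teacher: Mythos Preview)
Your proof is correct and follows essentially the same route as the paper: rewrite the exit event as $\{\sup_{t\in[0,T]} e^{\mu x_t^2}\geq e^{\mu R^2}\}$, apply Markov's inequality, and invoke Lemma~\ref{lemma-exp-est}. If anything, you are slightly more careful than the paper in justifying the event inclusion via path continuity.
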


\begin{proof} 
  We will use Lemma~\ref{lemma-exp-est}
  and Markov's inequality.
  \begin{equation*}
    \begin{split}
      \P(T^*_R < T) & = \P \left(\sup_{t
          \in [0,T]}|x_t| \geq R \right) = \P
      \left(\sup_{t \in [0,T]} \exp(\mu
        x_t^2) \geq \exp(\mu R^2) \right)\\ &
      \leq e^{-\mu R^2} \E \sup_{t \in
        [0,T]} e^{\mu x_t^2} \leq 3 e^{-\mu
        R^2}(1+\E e^{\xi^2/2}),
    \end{split}
  \end{equation*}
  where $\mu>0$ comes from~\ref{lemma-exp-est}.
\end{proof}

\begin{lemma}                                                \label{lemma-wbar}
Let $w$ be given by \eqref{eq:am:op:pr2}, let $g$ be a bounded Lipschitz continuous function of $x$ and let Assumption~\ref{ass:bnd} be satisfied. 
Furthermore, let $R_1 > 0$ be given. 
Let $g_{R_1}$ be a function that is equal to $g$ inside $B_{R_1}$, zero outside $B_{R_1+1}$ and Lipschitz continuous.
Let
\begin{equation}                                           \label{eq:wb}
\wb(t,x) := \E_{t,x} \sup_{T^* \in \STtT} \left(e^{-\int_t^{T^*} \rho(u,x_u)du}\gb(x_{T^*}) \right),
\end{equation} 
and let $T^*_{R_1} := \inf\{u \in [t, T] : |x_u^{t,x}| \geq R_1\}$.
Then 
\begin{equation*}
|w(t,x) - \wb(t,x)| \leq K\P_{t,x} \{T^*_{R_1} < T\}.
\end{equation*} 
\end{lemma}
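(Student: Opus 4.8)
The plan is to view $w$ and $\wb$ as the values of two optimal stopping problems that share the same dynamics and discounting and differ only in the terminal reward, namely $g$ for $w$ and $\gb=g_{R_1}$ for $\wb$, together with the crucial fact that $g$ and $\gb$ coincide on $B_{R_1}$. First I would put $w$ into the form used for $\wb$: after the standard time shift sending a stopping time in $\ST[0,T-t]$ to one in $\STtT$ and replacing $\int_0^{T^*}$ by $\int_t^{T^*}$, the definition \eqref{eq:am:op:pr2} reads
\[
w(t,x)=\sup_{T^*\in\STtT}\E_{t,x}\!\left(e^{-\int_t^{T^*}\rho(u,x_u)\,du}\,g(x_{T^*})\right),
\]
so that $w$ and $\wb$ are suprema over the same collection of stopping times of expectations of the same discounted reward, differing only through $g$ versus $\gb$ (the supremum in $\wb$ may equally be read outside the expectation, since the estimate below commutes with $\E_{t,x}$).

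The key step is the elementary stability of the supremum, $\bigl|\sup_a A(a)-\sup_a B(a)\bigr|\le\sup_a|A(a)-B(a)|$, applied to the family indexed by $T^*\in\STtT$. Since $e^{-\int_t^{T^*}\rho\,du}\le1$ because $\rho\ge0$, this gives
\[
|w(t,x)-\wb(t,x)|\le\sup_{T^*\in\STtT}\E_{t,x}\!\left(e^{-\int_t^{T^*}\rho(u,x_u)\,du}\,\bigl|g-\gb\bigr|(x_{T^*})\right)\le\sup_{T^*\in\STtT}\E_{t,x}\,\bigl|g-\gb\bigr|(x_{T^*}).
\]

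It remains to localise the right-hand side. Because $g=\gb$ on $B_{R_1}$, the integrand $\bigl|g-\gb\bigr|(x_{T^*})$ vanishes unless $|x_{T^*}|\ge R_1$; and since $\|g-\gb\|_\infty\le 2\|g\|_\infty<\infty$, enlarging $K$ if necessary I may assume $|g-\gb|\le K$. On the event $\{|x_{T^*}|\ge R_1\}$ the continuous path $x$ has already left $B_{R_1}$ by the time $T^*\le T$, so $\one_{\{|x_{T^*}|\ge R_1\}}\le\one_{\{T^*_{R_1}\le T\}}$. Therefore each term is bounded by $K\,\P_{t,x}\{T^*_{R_1}\le T\}$ uniformly in $T^*$, and taking the supremum over $T^*$ yields the claim.

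The bulk of the argument is bookkeeping; the one point to check carefully is the passage to the exit-time event, where the inclusion $\{|x_{T^*}|\ge R_1\}\subseteq\{T^*_{R_1}\le T\}$ relies only on path continuity and $T^*\le T$, and where the uniformity in $T^*$ (the constant $K$ and the exit event do not depend on the chosen stopping time) is exactly what allows the supremum to pass through the bound. The purely cosmetic discrepancy between $\{T^*_{R_1}\le T\}$ and the $\{T^*_{R_1}<T\}$ written in the statement is immaterial and is handled precisely as in the proof of Corollary~\ref{corollary-ptaur}.
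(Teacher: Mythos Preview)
Your argument is correct and follows essentially the same route as the paper: bound the difference of suprema by the supremum of the difference, use that $g=\gb$ on $B_{R_1}$ together with $|g-\gb|\le K$ to replace the integrand by $K\one_{\{|x_{T^*}|\ge R_1\}}$, include this event in $\{T^*_{R_1}\le T\}$, and drop the discount factor via $\rho\ge 0$. Your additional remarks on the time shift, the placement of the supremum in the definition of $\wb$, and the $\le T$ versus $<T$ point are all appropriate and do not change the underlying approach.
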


\begin{proof}
Since the difference of supremums is less than the supremum of the difference, 
\begin{equation*}
I:=|w(t,x) - \wb(t,x)| \leq \sup_{T^* \in \STtT}\E_{t,x}\left(e^{\int_t^{T^*} \rho(u,x_u) du} |g(x_{T^*}) - \gb(x_{T^*})|\right).
\end{equation*}
Notice that for $x \in B_{R_1}$, $g(x) = \gb(x)$ and that for all $x\in \R^d$, $|g(x)-\gb(x)|\leq K$. 
Hence
\begin{equation*}
I \leq  \sup_{T^* \in \STtT}\E_{t,x} \one_{\{T^*_{R_1} < T\}} \left(e^{-\int_t^{T^*} \rho(u,x_u) du}K\right).
\end{equation*}
Noting that $\rho \geq 0$ concludes the proof.
\end{proof}
The following corollary is an immediate consequence of Lemma~\ref{lemma-wbar} and
Corollary~\ref{corollary-ptaur}.
\begin{corollary}                                      \label{c:R}
  Let $w$ be given by \eqref{eq:am:op:pr2} and
  $\wb$ be given by \eqref{eq:wb}. Let $R_2 < R_1$ be a positive real number. 
Then there exists $\mu > 0$, such that for all $(t,x)\in [0,T] \times B_{R_2}$,
\begin{equation*}
|w(t,x) - \wb(t,x)| \leq Ce^{-\mu R_1^2 + R_2^2/2}.
\end{equation*}
\end{corollary}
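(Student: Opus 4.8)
The plan is to derive the estimate directly by chaining the two preceding results, since the statement is essentially a packaging of them. Lemma~\ref{lemma-wbar} already gives $|w(t,x) - \wb(t,x)| \leq K\,\P_{t,x}\{T^*_{R_1} < T\}$, so the only remaining task is to bound the exit probability on the right. For that I would invoke Corollary~\ref{corollary-ptaur} with its generic radius $R$ specialised to $R_1$ and with the initial condition $\xi = x$ taken to be the deterministic point of $B_{R_2}$ from which the log-price process is started at time $t$. No new probabilistic estimate is needed; the entire content lies in checking that the hypotheses of Corollary~\ref{corollary-ptaur} (equivalently, of Lemma~\ref{lemma-exp-est}) hold for our process and then tracking the constants.

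The point requiring verification is the structural bound~\eqref{eq:exp:est:ass} for the log-price process $(x_u)_{u\in[t,T]}$ obtained via $x^i = \ln S^i$, whose diffusion coefficient is $\sigma(u,x)=\bar\sigma(u,e^x)$ and whose drift is $\beta = \rho - \tfrac12\sum_j(\sigma^{ij})^2$. Using Assumption~\ref{ass:bnd} I would argue as follows: $|\sigma_u|^2 \le K^2$ is bounded; $(a_u x_u, x_u) = \tfrac12|\sigma_u^T x_u|^2 \le \tfrac12 K^2 |x_u|^2$; and since $\rho$ and $\sigma$ are bounded, so is $\beta$, whence $2(x_u,\beta_u) \le |x_u|^2 + |\beta_u|^2 \le |x_u|^2 + C$. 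Summing these three contributions gives $2(x_u,\beta_u) + |\sigma_u|^2 + (a_u x_u,x_u) \le C(1+|x_u|^2)$, which is exactly~\eqref{eq:exp:est:ass}. Moreover, because $\xi = x$ is deterministic, $\E e^{\xi^2} = e^{|x|^2} < \infty$, so the second hypothesis of Lemma~\ref{lemma-exp-est} is trivially satisfied.

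With the hypotheses in hand, Corollary~\ref{corollary-ptaur} yields $\P_{t,x}\{T^*_{R_1} < T\} \le 3e^{-\mu R_1^2}(1 + \E e^{\xi^2/2})$ for some $\mu>0$ depending only on $K$ and $T$. Since $x \in B_{R_2}$, the deterministic start gives $\E e^{\xi^2/2} = e^{|x|^2/2} \le e^{R_2^2/2}$, and the elementary bound $1 + e^{R_2^2/2} \le 2e^{R_2^2/2}$ reduces the probability to $6\,e^{-\mu R_1^2 + R_2^2/2}$. Inserting this into Lemma~\ref{lemma-wbar} produces $|w - \wb| \le 6K\,e^{-\mu R_1^2 + R_2^2/2}$ on $[0,T]\times B_{R_2}$, which is the assertion with $C = 6K$. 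The one genuinely technical detail — and the step I would be most careful about — is that the exit time in Lemma~\ref{lemma-wbar} refers to a process started at time $t$, whereas Corollary~\ref{corollary-ptaur} is phrased for processes started at $0$ over the horizon $[0,T]$; I would reconcile this by passing to the shifted process $\tilde x_s := x_{t+s}$ on $[0,T-t]$, whose coefficients satisfy the very same bounds, so that the corollary applies over a horizon at most $T$ and the constant $\mu$ (depending only on $K$ and $T$) can be chosen uniformly in $t$.
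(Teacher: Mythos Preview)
Your proposal is correct and follows precisely the route the paper indicates: the paper simply states that the corollary is an immediate consequence of Lemma~\ref{lemma-wbar} and Corollary~\ref{corollary-ptaur}, and you have faithfully filled in those details, including the verification of hypothesis~\eqref{eq:exp:est:ass} from Assumption~\ref{ass:bnd} and the time-shift needed to match the horizons.
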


In other words, by cutting the function $g$ outside the ball of radius $R_1$ we have introduced an error in the American put price that is decreasing exponentially as the ball increases in radius. 
Now we will show that, if the discrete problem is solved with a payoff function, that is zero outside a ball of radius $R_1$ and the grid is restricted to a ball of radius $R>R_1$, then the difference between this solution and the solution of the discrete problem on the whole grid with the original payoff function decreases exponentially with $R$ and $R_1$.
\begin{lemma} 
\label{lemma-discrete-comparison-principle}
Let $g_1$ and $g_2$ be functions of $x$, such that $g_1 \leq g_2 < \infty$.
Let $u_1,u_2$ be defined on $\bar{\mathcal{M}}_T$ and such that $u_1e^{-\mu|x|}$ and $u_2 e^{-\mu|x|}$ are bounded for some $\mu>0$. 
Let $C \geq 0$ and assume that
\begin{equation}
\label{eqn-comp-princ-ass1}
\max\left[\fddeltatau^T u_1 + \L_h u_1 + C, g-u_1 \right] \geq \max\left[\fddeltatau^T u_2 + \L_h u_2, g-u_2\right] \on Q
\end{equation}
and $u_1 \leq u_2$ on $\bmt \setminus Q$.
If $h\leq 1$ then there is a constant $\bar{\tau}$ depending only on $K, d_1, \mu$ such that
\begin{equation}\label{eqn-comparison-princ-conclusion}
u_1 \leq u_2 + TC \on \bar{\mathcal{M}}_T,
\end{equation}
for $\tau \in (0, \bar{\tau})$. If $u_1$, $u_2$ are bounded on
  $\mathcal{M}_T$ then
  \eqref{eqn-comparison-princ-conclusion}
  holds for any $\tau$.
\end{lemma}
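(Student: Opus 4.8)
The plan is to run a weighted discrete maximum principle. Write $m:=u_1-u_2$. I would first distil from \eqref{eqn-comp-princ-ass1} a pointwise dichotomy on $Q$: at each grid point either the obstacle is active for $u_1$, in which case $g-u_1\ge g-u_2$ and hence $m\le 0$, or else $\fddeltatau^T u_1+\L_h u_1+C\ge\fddeltatau^T u_2+\L_h u_2$, that is
\[
\fddeltatau^T m+\L_h m\ge -C .
\]
Both alternatives follow from $\max[A,B]\ge\max[A',B']$ by comparing the left maximum with each argument of the right one. I would also record the monotone structure of the scheme: after grouping terms,
\[
\fddeltatau^T\eta+\L_h\eta=\tfrac1\tau\eta(t+\tau_T(t),x)+\sum_k\Big(\tfrac{a_k}{h^2}+\tfrac{b_k}{h}\Big)\eta(t,x+h\ell_k)+\sum_k\tfrac{a_k}{h^2}\eta(t,x-h\ell_k)-\Lambda\,\eta(t,x),
\]
where all off-centre weights are non-negative since $a_k\ge0$, $b_k\ge0$, $h\le1$, and $\Lambda>0$; this is the discrete maximum principle that the whole argument rests on.

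Second, I would construct a barrier. Since $u_1e^{-\mu|x|}$ and $u_2e^{-\mu|x|}$ are bounded, $m$ grows at most like $e^{\mu|x|}$, so I would fix a smooth coercive $\Psi\ge0$ comparable to $e^{2\mu|x|}$ and prove the one-sided estimate $\L_h\Psi\le C_0\Psi$ with $C_0=C_0(K,d_1,\mu)$; this uses $|\ell_k|\le K$, the boundedness of $a_k,b_k,\rho$ and $h\le1$ to control the discrete first and second differences of an exponential. I would then set $\psi(t,x):=e^{\kappa(T-t)}\Psi(x)$ and note that on a full time step ($\tau_T(t)=\tau$),
\[
\fddeltatau^T\psi+\L_h\psi\le\psi\Big(\tfrac{e^{-\kappa\tau}-1}{\tau}+C_0\Big),
\]
which can be made $\le-c\,\psi$ for some $c>0$ exactly when $\tau$ is below a threshold $\bar\tau$ fixed by $C_0$ and $\kappa$ is chosen accordingly. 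This is precisely where the restriction $\tau\in(0,\bar\tau)$ and its dependence on $K,d_1,\mu$ enter.

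Third, the comparison itself. For $\e>0$ set $n:=m-C(T-t)-\e\psi$. As there are only finitely many time levels and $n\to-\infty$ as $|x|\to\infty$, the supremum $\beta:=\sup_{\bmt}n$ is attained at some $(t_*,x_*)$. If this point lies in $\bmt\setminus Q$ then $u_1\le u_2$ there by hypothesis, so $m\le0$ and $\beta\le0$. If it lies in $Q$ and $\beta>0$, then $m(t_*,x_*)>0$, so by the dichotomy $\fddeltatau^T m+\L_h m\ge -C$, while at a maximum of $n$ every discrete difference quotient of $n$ is non-positive, whence $\fddeltatau^T n+\L_h n\le0$. Substituting $n=m-C(T-t)-\e\psi$ and using $\fddeltatau^T[C(T-t)]=-C\,\tau_T(t)/\tau$ and $\L_h[C(T-t)]=-\rho\,C(T-t)$, the term $C(T-t)$ absorbs $-C$ on full steps and the strict supersolution property supplies a surplus $\e c\,\psi(t_*,x_*)>0$, contradicting $\fddeltatau^T n+\L_h n\le0$. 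On the terminal step ($\tau_T(t_*)=T-t_*<\tau$) the forward point lies on $\{t=T\}\subset\bmt\setminus Q$, where $m\le0$, so $n(T,x_*)\le-\e\Psi(x_*)<\beta$; feeding this strict drop into $\fddeltatau^T n$, together with $\L_h\Psi\le C_0\Psi$ and the choice of $\bar\tau$, again forces $\beta\le0$. Hence $\beta\le0$ in every case, i.e. $m\le C(T-t)+\e\psi$; letting $\e\downarrow0$ gives $m\le C(T-t)\le TC$, which is \eqref{eqn-comparison-princ-conclusion}.

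Finally, for the bounded case I would replace the exponential $\Psi$ by a constant: since $\L_h$ applied to a constant equals $-\rho\le0$, the function $e^{\kappa(T-t)}$ is a strict supersolution for every $\tau$, and the coercivity needed for attainment is recovered by an arbitrarily small polynomial weight (or by working with approximate maximisers); the same argument then yields the conclusion with no restriction on $\tau$. I expect the genuine obstacle to be the barrier construction: a purely spatial coercive weight can never be a discrete supersolution of $\L_h$, since a positive coercive function cannot be superharmonic, so the time factor $e^{\kappa(T-t)}$ is indispensable, and reconciling it with the $1/\tau$ scaling of $\fddeltatau^T$ is exactly what produces $\tau<\bar\tau$. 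The irregular terminal step is a secondary subtlety, resolved through the terminal condition rather than the supersolution property.
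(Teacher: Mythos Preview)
The paper does not give its own proof of this lemma; immediately after the statement it simply records that the lemma is a special case of \cite[Lemma~3.9]{gyongy:siska:on:the:rate}. Your proposal---a weighted discrete maximum principle using an exponential barrier $e^{\kappa(T-t)}\Psi(x)$ to compensate for the possible growth of $u_1-u_2$---is exactly the standard technique for such results, and is essentially what one finds in that reference. The dichotomy extracted from \eqref{eqn-comp-princ-ass1}, the barrier estimate $\L_h\Psi\le C_0\Psi$, and the identification of the threshold $\bar\tau$ from $(e^{-\kappa\tau}-1)/\tau+C_0<0$ are all correct, and the dependence of $\bar\tau$ on $K,d_1,\mu$ is traced accurately.

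One place that deserves more care than you give it is the irregular terminal step. There $\fddeltatau^T[C(T-t)]=-C(T-t_*)/\tau$ rather than $-C$, so the linear-in-time part no longer fully absorbs the $-C$ coming from the dichotomy; you propose to recover the deficit through the strict drop $n(T,x_*)\le-\e\Psi(x_*)$, and that can be made to work, but it needs the barrier inequality $\fddeltatau^T\psi+\L_h\psi\le-c\psi$ to hold on that short step as well (which requires a separate check, since the discrete time derivative is now $(e^{-\kappa(T-t_*)}-1)/\tau$ with $T-t_*<\tau$). A cleaner alternative, also common in the literature, is to argue by backward induction over the finitely many time levels, which sidesteps the irregular step entirely. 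For the bounded case your remark that a constant spatial weight suffices is correct; attainment of the supremum then comes for free without any auxiliary polynomial weight, since on each time level one may simply take a maximising sequence and pass to the limit using boundedness.
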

Lemma~\ref{lemma-discrete-comparison-principle} is a special case of \cite[Lemma 3.9]{gyongy:siska:on:the:rate}, while following corollary is just a special case of 
\cite[Corollary 3.11]{gyongy:siska:on:the:rate}.
\begin{corollary}
\label{corollary-soln-to-disc-bellman-pde-is-bdd}
If $\wth$ is the solution of \eqref{eq:fd:scheme:american} then 
$$|\wth| \leq C+\sup_{x\in \R^d}|g(x)|.$$
\end{corollary}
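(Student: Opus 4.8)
The plan is to deduce the estimate directly from the discrete comparison principle, Lemma~\ref{lemma-discrete-comparison-principle}, by comparing $\wth$ against the two constant barriers $\pm M$, where $M := \sup_{x\in\R^d}|g(x)|$ (finite since $g$ is bounded). The key observation that makes constants a natural choice is that every finite difference operator appearing here annihilates constants: for a constant $m$ one has $\fddeltatau^T m = 0$ and $\fdDeltak m = \fddeltak m = 0$, so $\fddeltatau^T m + \L_h m = -\rho m$. The sign condition $\rho \geq 0$ then does all the work.

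For the upper bound I would apply Lemma~\ref{lemma-discrete-comparison-principle} with $u_1 = \wth$, $u_2 \equiv M$, and $C = 0$. On $Q$ the left-hand side of \eqref{eqn-comp-princ-ass1} equals $\max[\fddeltatau^T \wth + \L_h \wth,\, g - \wth] = 0$ by \eqref{eq:fd:scheme:american}, while for the constant $M$ we have $\fddeltatau^T M + \L_h M = -\rho M \leq 0$ and $g - M \leq 0$, so the right-hand side of \eqref{eqn-comp-princ-ass1} is nonpositive and the hypothesis holds; on $\bmt \setminus Q$ we have $\wth = g \leq M$. The conclusion \eqref{eqn-comparison-princ-conclusion} then gives $\wth \leq M$. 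The lower bound is the mirror image: take $u_1 \equiv -M$, $u_2 = \wth$, $C = 0$, note that $\fddeltatau^T(-M) + \L_h(-M) = \rho M \geq 0$ and $g + M \geq 0$ so the left-hand side of \eqref{eqn-comp-princ-ass1} is nonnegative while its right-hand side is $0$, and that $-M \leq g = \wth$ on $\bmt \setminus Q$; the comparison principle yields $-M \leq \wth$. Combining the two gives $|\wth| \leq M \leq C + \sup_{x}|g(x)|$.

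The main point requiring care is verifying that Lemma~\ref{lemma-discrete-comparison-principle} is admissible for this pair of functions, rather than the (routine) barrier computations above. The constant barriers are trivially bounded, but applying the comparison principle to $\wth$ requires knowing that $\wth e^{-\mu|x|}$ is bounded for some $\mu>0$; this controlled-growth property is precisely what the existence theory underlying Lemma~\ref{l:u} supplies for the solution of \eqref{eq:fd:scheme:american}, and I would quote it here. The remaining subtlety is the smallness restriction on $\tau$ in Lemma~\ref{lemma-discrete-comparison-principle}, which for $h \leq 1$ is lifted only once $u_1,u_2$ are both \emph{bounded}. I would close this gap by observing that the solution produced by the construction in Lemma~\ref{l:u} is in fact bounded (equivalently, by first running the argument in the regime $\tau \in (0,\bar\tau)$ to extract boundedness of $\wth$, the bound $M$ being independent of $\tau$), after which the bounded-coefficient clause of Lemma~\ref{lemma-discrete-comparison-principle} delivers the estimate for every $\tau$.
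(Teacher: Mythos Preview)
Your argument is correct and in fact yields the slightly sharper bound $|\wth|\leq \sup_x|g(x)|$; the paper does not prove the corollary at all but simply records it as a special case of \cite[Corollary~3.11]{gyongy:siska:on:the:rate}, so there is no proof in the paper to compare with. Your route---constant barriers $\pm M$ combined with Lemma~\ref{lemma-discrete-comparison-principle} and the observation that $\fddeltatau^T M + \L_h M = -\rho M$---is exactly the natural one and is presumably what the cited corollary amounts to in this special case.

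The only delicate point, which you already flag, is the hypothesis in Lemma~\ref{lemma-discrete-comparison-principle} that $\wth e^{-\mu|x|}$ be bounded (and, for unrestricted $\tau$, that $\wth$ itself be bounded). This is indeed supplied by the existence theory behind Lemma~\ref{l:u}: the solution constructed in \cite[Theorem~3.4]{gyongy:siska:on:the:rate} lies in the class of functions with at most exponential growth, and in fact is bounded when $g$ and the normalised data are bounded. So your appeal to that source is legitimate, though it means your argument, like the paper's, ultimately rests on \cite{gyongy:siska:on:the:rate} for this a~priori input. Your two-step workaround (first $\tau\in(0,\bar\tau)$, then use boundedness to lift the restriction) does not quite close the loop on its own, since the first step already needs the exponential-growth bound; but once that is granted from the existence theory, everything goes through.
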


\begin{lemma}
\label{lemma-R-est}
Assume that $g(x) = 0$ for $|x| \geq R.$ 
Then, for some $\gamma \in (0,1)$,
$$|\wth(t,x)| \leq C_Te^{\gamma (R-|x|)}.$$
\end{lemma}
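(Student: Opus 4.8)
The plan is to bound $\wth$ by an explicit exponential barrier and invoke the discrete comparison principle, Lemma~\ref{lemma-discrete-comparison-principle}. I take $Q=\mt$, so that $\bmt\setminus Q=\{T\}\times\R^d$, where $\wth=g$. Since the payoff here is a put payoff, $g\geq 0$, hence $\wth\geq 0$, so it suffices to produce an \emph{upper} barrier and $|\wth|=\wth$ (the two-sided bound then follows by running the same construction on $-\wth$). I would introduce $\psi(x):=e^{\gamma(R-|x|)}$ and the time-weighted barrier $\phi(t,x):=C_T\,e^{\lambda(T-t)}\psi(x)$, with $\gamma\in(0,1)$, $\lambda>0$ and $C_T>0$ to be chosen. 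To apply Lemma~\ref{lemma-discrete-comparison-principle} with $u_1=\wth$, $u_2=\phi$ and $C=0$, three conditions are needed on $\mt$ (the left side of \eqref{eqn-comp-princ-ass1} for $\wth$ being identically $0$): the boundary domination $\phi\geq g$ on $\{T\}\times\R^d$, the obstacle domination $\phi\geq g$, and the supersolution inequality $\fddeltatau^T\phi+\L_h\phi\leq 0$.

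The first two are immediate once $C_T\geq\sup|g|$: inside $B_R$ one has $\psi\geq 1$, so $\phi\geq C_T\geq\sup|g|\geq g$, and outside $B_R$ we have $g=0\leq\phi$. The boundedness of $\wth$ from Corollary~\ref{corollary-soln-to-disc-bellman-pde-is-bdd} guarantees the weighted boundedness of $u_1$ required by the comparison lemma, and $\phi e^{-\mu|x|}$ is bounded trivially. The whole weight of the proof therefore sits in the supersolution inequality. Since $\phi$ is a product, $\L_h\phi=C_T e^{\lambda(T-t)}\L_h\psi$, so I would first establish a pointwise estimate $\L_h\psi\leq C_3\,\psi$ with $C_3=C_3(\gamma,K,d_1)$ independent of $h\in(0,1]$, $x$ and $t$. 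This rests on two $h$-uniform difference estimates: a Taylor/convexity bound for the centred second difference, $\fdDeltal\psi\leq C(\gamma+\gamma^2)\psi$ (the exponential cancellation tames the $h^{-2}$; the radial curvature term $-\tfrac{\gamma}{|x|}(|\ell|^2-(\hat x\cdot\ell)^2)\psi\leq 0$ helps; and at the Lipschitz corner $x=0$ the second difference is negative), and a first-difference bound $\fddeltal\psi\leq C\gamma\,\psi$. Using $a_k,b_k,\rho\geq 0$ bounded, $|\ell_k|\leq K$, and $-\rho\psi\leq 0$, summation over $k$ gives $\L_h\psi\leq C_3\psi$.

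I expect the genuine difficulty to be twofold. First, the drift contribution $\sum_k b_k\fddeltal\psi$ has indefinite sign, and in directions where the diffusion degenerates ($a_k=0$ but $b_k>0$ pointing inward) it is a strictly positive multiple of $\psi$ that the (vanishing) second-order term cannot absorb; thus $C_3>0$ in general and $\psi$ alone is \emph{not} a spatial supersolution. This is exactly what the time factor repairs: since $\fddeltatau^T\phi=\phi\,\frac{e^{-\lambda\taut}-1}{\tau}$, on the interior layers (where $\tau_T(t)=\tau$) one gets $\fddeltatau^T\phi=-\tfrac{1-e^{-\lambda\tau}}{\tau}\phi$, and choosing $\lambda>C_3$ and $\tau\leq\bar\tau$ so that $\tfrac{1-e^{-\lambda\tau}}{\tau}\geq C_3$ yields $\fddeltatau^T\phi+\L_h\phi\leq\bigl(C_3-\tfrac{1-e^{-\lambda\tau}}{\tau}\bigr)\phi\leq 0$. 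Second, and more delicate, on the single terminal layer $\tau_T(t)=T-t=:\delta$ can be an arbitrarily small fraction of $\tau$ (the grid is centred arbitrarily), so $\fddeltatau^T\phi$ degenerates like $\delta/\tau$ and no fixed $\lambda$ suffices. I would repair this by giving the time profile a small jump across the last step, replacing $e^{\lambda(T-t)}$ by $e^{\lambda(T-t)}+c\tau$ for $t<T$ while keeping the value at $t=T$ equal to $1$; then $\phi(t,x)-\phi(t+\delta,x)\geq c\tau\,\psi(x)$ regardless of $\delta$, so $\fddeltatau^T\phi\leq -c\,\psi$ on that layer, and $c\geq 2C_3$ restores the inequality while enlarging $C_T$ only by a bounded factor (the boundary value $\phi(T,\cdot)=C_T\psi\geq g$ is unaffected).

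With all three conditions verified for $\tau\leq\bar\tau$, Lemma~\ref{lemma-discrete-comparison-principle} gives $\wth\leq\phi\leq C_T e^{\lambda T}e^{\gamma(R-|x|)}$ on $\bmt$; absorbing $e^{\lambda T}$ into the constant and using $\wth\geq 0$ yields $|\wth(t,x)|\leq C_T e^{\gamma(R-|x|)}$ with $\gamma\in(0,1)$, as claimed. For $\tau>\bar\tau$ one instead uses the bounded-$\wth$ case of the comparison lemma, available through Corollary~\ref{corollary-soln-to-disc-bellman-pde-is-bdd}. The main obstacle is squarely the verification of the supersolution inequality: controlling the indefinite first-order term in degenerate directions via the time-exponential factor, and handling the terminal layer where $\tau_T(t)<\tau$.
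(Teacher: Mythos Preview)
Your approach is correct but proceeds along a genuinely different line from the paper. The paper does not use the radial barrier $\psi(x)=e^{\gamma(R-|x|)}$ at all: it takes an arbitrary unit vector $l$, sets $\eta(x)=e^{\gamma(x,l)}$ (a \emph{smooth} directional exponential), couples it with a \emph{recursively defined} time factor $\xi(T)=1$, $\xi(t)=\gamma^{-1}\xi(t+\tau_T(t))$, so that $\fddeltatau^T(\xi\eta)=\tau^{-1}(\gamma-1)\xi\eta$ holds identically (this kills your terminal-layer issue for free), and applies the comparison principle on the half-space $Q=\{(x,l)\le -R\}\subset\{|x|\ge R\}$ where $g=0$; boundedness of $\wth$ from Corollary~\ref{corollary-soln-to-disc-bellman-pde-is-bdd} handles the complement. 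Only at the end is the infimum over $l$ taken to recover $e^{-\gamma|x|}$. What the paper's route buys is smoothness of $\eta$ (so the fourth-order Taylor remainder is legitimate everywhere) and a time factor that needs no patching. What your route buys is directness: no sweep over directions, and the obstacle inequality $\phi\ge g$ on all of $\mt$ lets you take $Q=\mt$.

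Two places in your write-up deserve tightening. First, your Taylor/curvature justification for $\fdDeltal\psi\le C\gamma\psi$ cannot literally go through a $\Dlk^4$ remainder, since $|x|$ is not $C^2$ at the origin and the continuous curvature term $-\gamma|x|^{-1}(|\ell|^2-(\hat x\cdot\ell)^2)$ blows up there; instead, use the purely discrete estimate: with $a=|x+h\ell|-|x|$, $b=|x-h\ell|-|x|$ one has $|a|,|b|\le h|\ell|$ and $a+b\ge 0$ by convexity of $|\cdot|$, whence $e^{-\gamma a}+e^{-\gamma b}-2\le -\gamma(a+b)+C\gamma^2 h^2\le C\gamma^2 h^2$, giving $\fdDeltal\psi\le C\gamma^2\psi$ uniformly. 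Second, your terminal-layer fix works, but note that $\L_h\phi\le C_3\phi$ with $\phi(t,\cdot)\sim(e^{\lambda\delta}+c\tau)\psi$ forces $c\ge C_3(e^{\lambda\tau}+c\tau)$, so you need $\tau$ small (say $C_3\tau<1$), not merely $c\ge 2C_3$; the paper's recursive $\xi$ sidesteps this bookkeeping entirely. Finally, the lemma does not assume $g\ge0$; your reduction to the upper bound via ``$\wth\ge0$'' is specific to put payoffs, but the lower bound follows cleanly by applying the same comparison with $u_1=-\phi$, $u_2=\wth$ (since $-\fddeltatau^T\phi-\L_h\phi\ge0$ and $-\phi\le g$ both inside and outside $B_R$).
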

\begin{proof} 
Let $\gamma \in (0,1)$ be some constant to be chosen later.
Let $\xi(t)$ be defined recursively by:
\begin{equation*}
\xi(T) = 1; \quad \xi(t) = \gamma^{-1}\xi(t+\tau_T(t)) \cfor t < T.
\end{equation*} 
Take an arbitrary unit $l \in \R^d$. 
Let $\eta(x) = \exp(\gamma (x,l))$ and $\zeta = \xi \eta$.
We would like to apply Lemma~\ref{lemma-discrete-comparison-principle}. Observe
  that by Taylor's theorem
\begin{equation*}
\begin{split}
\fdDeltak \eta(y) & = \Dlk^2 \eta(y) + \frac{1}{6h^2} \int_{-h}^h \Dlk^4 \eta(y+sl_k)(h-|s|)^3ds \\ 
& \leq \Dlk^2 \eta(y) + \frac{h^2}{12}\sup_{s\in (-h,h)}|\Dlk^4 \eta(y+sl_k)|.
\end{split}
\end{equation*} 
As $|l_k| < K$ and $\gamma \in (0,1)$, we obtain that
\begin{equation*}
\Dlk^2 \eta = \eta \gamma^2 (l,l_k)^2 \leq C \gamma \eta \cand \Dlk^4 \eta \leq C \gamma \eta.
\end{equation*}
Hence
\begin{equation*}
\sup_{s\in (-h,h)}|\Dlk^4 \eta(y+sl_k)| \leq C \gamma \eta(y)\sup_{s\in (-h,h)}|\exp(sl_k,l_k)| \leq C \gamma \eta(y),
\end{equation*} 
provided $h < K$. 
Thus $\fdDeltak \eta \leq C \gamma \eta$.
Similarly $\fddeltak \eta \leq C \gamma \eta$.
Furthermore, since $\fddeltatau^T(\xi \eta) = \tau^{-1}(\gamma - 1)\xi \eta$, we can see that, for sufficiently small $\gamma$
  \begin{equation*}
    \tfrac{1}{1+r}(\fddeltatau^T \zeta + \L_h
    \zeta - r \zeta) \leq (\tau^{-1}(\gamma - 1) +
    C\gamma) \zeta \leq 0 \on H_T.
  \end{equation*} Let $Q = \{(t,x) \in
  \mathcal{M}_T:(x,l) \leq -R\}$. Since
  $-R \geq (x,l) = |x|\cos \theta \geq
  -|x|$, $Q \subset \{(t,x) \in
  \mathcal{M}_T: |x| \geq R\}$. Recall
  that $g(x) = 0$ for $|x| \geq
  R$. Hence for any constant $C>0$,
  $$ \tfrac{1}{1+r}(\fddeltatau^T C\zeta + \L_h C\zeta -rC\zeta + rg) \leq 0 \on Q.$$
  For $(t,x) \in \bar{\mathcal{M}}_T
  \setminus Q$, either $t < T$ and $(x,l)
  > - R$ or $t = T$ and $(x,l) \geq
  -|x|$. Hence either
  \begin{equation*}
    \zeta(t,x) \geq e^{-\gamma R} \textrm{ or }
    \zeta(t,x) \geq e^{-\gamma |x|}.
  \end{equation*} 
In the first case, we know from Corollary~\ref{corollary-soln-to-disc-bellman-pde-is-bdd} that $\wth$ is bounded by a constnat.
Taking $C$ large enough, $e^{\gamma R}C\zeta \geq \wth$.
In the second
  case, $t = T$ and so $\wth =
  g$. As $g(x)=0$ for $|x|\geq R$ we only
  need to consider $|x| < R$ and so for
  large $C$, $e^{\gamma R}C\zeta \geq g =
  \wth$.
Either way, for $C$ large enough, $Ce^{\gamma R}\zeta \geq \wth$ on $\bar{\mathcal{M}}_T \setminus Q$.
By
  Lemma~\ref{lemma-discrete-comparison-principle}
  $\wth \leq Ce^{\gamma R}\zeta$ in
  $\mathcal{M}_T$. Since the choice of the
  unit vector $l$ was arbitrary we can see
  that in $\mathcal{M}_T$
  \begin{equation*}
    \wth \leq C_Te^{\gamma
      R}\exp(-\gamma |x|).
  \end{equation*} Analogous application of
  Lemma~\ref{lemma-discrete-comparison-principle}
  would yield that in $\mathcal{M}_T$
  \begin{equation*}
    \wth \geq -C_Te^{\gamma
      R}\exp(-\gamma |x|)
  \end{equation*}
  and hence complete the proof.
\end{proof}

\begin{lemma}\label{lemma-fd-to-balls}
Let Assumption~\ref{ass:bnd} and~\ref{ass-on-the-scheme} be satisfied.
Let $R\geq R_1>0$ and consider $\gb$, a bounded Lipschitz continuous function that is zero outside $B_{R_1}$.
Let $\wthRone$ be a function satisfying \eqref{eq:fd:scheme:american} on $\mt$ and such that $\wthRone = \gb$ on $\bmt \setminus \mt$.
Let $\wthRR$ be the solution to \eqref{eq:fd:scheme:american} on $Q_R$ with $\wthRR = \gb$ on $\bmt \setminus Q_R$. 
Then for some constant $\gamma \in (0,1)$,
$$|\wthRone - \wthRR| \leq Ce^{\gamma(R_1 - R)} \on \bmt.$$
\end{lemma}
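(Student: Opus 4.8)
The plan is to control the gap between the two discrete solutions entirely through their mismatch outside the ball $B_R$ and then to propagate that mismatch inward using the discrete comparison principle. The crucial observation is that the payoff $\gb$ vanishes outside $B_{R_1}$, so Lemma~\ref{lemma-R-est}, applied to $\wthRone$ with the radius $R_1$ in place of $R$, yields $|\wthRone(t,x)| \leq C_T e^{\gamma(R_1 - |x|)}$ on $\mt$. In particular, on the exterior region $|x| \geq R$ this gives $|\wthRone(t,x)| \leq C_T e^{\gamma(R_1 - R)} =: \e$, which is exactly the order of the error we wish to establish.

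Next I would record that adding a nonnegative constant preserves the supersolution property of the scheme. Indeed, for any constant $c \geq 0$ one has $\fddeltatau^T(w+c) = \fddeltatau^T w$ and $\L_h(w+c) = \L_h w - \rho c$, since the finite difference operators annihilate constants and $\rho \geq 0$; hence
\begin{equation*}
\max\left[\fddeltatau^T(w+c) + \L_h(w+c),\, g - (w+c)\right] \leq \max\left[\fddeltatau^T w + \L_h w,\, g - w\right].
\end{equation*}
Consequently, because $\wthRR$ solves \eqref{eq:fd:scheme:american} on $Q_R$, the shifted function $\wthRR + \e$ satisfies $\max[\ldots] \leq 0$ there, while $\wthRone$ satisfies $\max[\ldots] = 0$ on $Q_R \subset \mt$. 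This is precisely the configuration required by hypothesis \eqref{eqn-comp-princ-ass1} of Lemma~\ref{lemma-discrete-comparison-principle} with $C = 0$.

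It then remains to verify the boundary inequality on $\bmt \setminus Q_R$. At terminal time both functions equal $\gb$, and in the exterior region $|x| \geq R$ with $t < T$ we have $\wthRR = \gb = 0$ while $|\wthRone| \leq \e$ by the first step; thus both $\wthRone \leq \wthRR + \e$ and $\wthRR \leq \wthRone + \e$ hold throughout $\bmt \setminus Q_R$. Since $\wthRone$ and $\wthRR$ are bounded (by Corollary~\ref{corollary-soln-to-disc-bellman-pde-is-bdd} and the same argument applied on $Q_R$), the bounded version of Lemma~\ref{lemma-discrete-comparison-principle} applies for every $\tau$, giving $\wthRone \leq \wthRR + \e$ on $\bmt$; exchanging the roles of $\wthRone$ and $\wthRR$ yields the reverse inequality, and combining the two gives $|\wthRone - \wthRR| \leq \e = C e^{\gamma(R_1 - R)}$, as claimed.

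The argument is conceptually short once Lemma~\ref{lemma-R-est} is available; I expect the only genuinely delicate point to be the bookkeeping on $\bmt \setminus Q_R$, namely confirming that the exterior decay estimate for $\wthRone$ and the explicit value $\gb = 0$ of $\wthRR$ align to produce a uniform gap of size $\e$, and checking that both comparisons run with $C = 0$ so that no spurious $TC$ term survives in the conclusion.
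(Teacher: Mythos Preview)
Your argument is correct and follows essentially the same route as the paper: invoke Lemma~\ref{lemma-R-est} to bound $|\wthRone|$ by $Ce^{\gamma(R_1-R)}$ on $\{|x|\geq R\}$, then apply the discrete comparison principle (Lemma~\ref{lemma-discrete-comparison-principle}) on $Q_R$ with the constant shift. The only cosmetic difference is that you add $\e$ to $\wthRR$ whereas the paper subtracts $\e$ from $\wthRone$; since the finite difference operators annihilate constants and $\rho\geq 0$, these are equivalent formulations of the same comparison.
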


\begin{proof} 
From Lemma~\ref{lemma-R-est} we know that for $|x| \geq R$, $|\wthRone| \leq Ce^{\gamma(R_1 - R)}$. 
Since $\wthRone$ satisfies \eqref{eq:fd:scheme:american} on $\mt$ and $\wthRR$ satisfies \eqref{eq:fd:scheme:american} on $Q_R$, we have, on $Q_R$,
\begin{equation*}
\begin{split} 
& \max\left(\fddeltatau^T \wthRR + L_h \wthRR, g-\wthRR\right) = 0 = \max\left(\fddeltatau^T \wthRone + \L_h \wthRone, g-\wthRone\right)\\
& \leq \max\left(\fddeltatau^T (\wthRone-Ce^{\gamma(R_1-R)}) + \L_h (\wthRone-Ce^{\gamma(R_1-R)}), g - (\wthRone - Ce^{\gamma(R_1 - R)})\right). \\ 
\end{split}
\end{equation*} 
For $(t,x) \in \bmt \setminus Q_R$ either $(t,x) \in [0,T)\times B_R^c \cap \mt$ and so
$$\wthRone - Ce^{\gamma(R_1 - R)} \leq 0 = \gb = \wthRR$$
or $t = T$ and so
$$\wthRone - Ce^{\gamma(R_1 - R)} = \gb - Ce^{\gamma(R_1 - R)} \leq \gb = \wthRR.$$ 
Hence by Lemma~\ref{lemma-discrete-comparison-principle}
applied to $\wthRone - Ce^{\gamma(R_1-R)}$ and $\wthRR$ on $Q_R$
$$\wthRone \leq \wthRR +  Ce^{\gamma(R_1 - R)} \on \bmt.$$ 
Similar argument for $\wthRone + Ce^{\gamma (R_1 - R)}$ gives
$$\wthRone + Ce^{\gamma(R_1 - R)} \geq \wthRR \on \bmt$$
and so completes the proof.
\end{proof}

\begin{proof}[Proof of Theorem~\ref{thm:american}.]
By Lemma~\ref{l:u}, \eqref{eq:fd:scheme:american} has a unique solution.
We consider the optimal stopping problem $w_{R_1}$ given by \eqref{eq:am:op:pr2}, but with $g$ replaced by $g_{R_1}$.
By Corollary~\ref{c:R} there exists $\mu > 0$ such that
\begin{equation*}
|w - w_{R_1}| \leq C e^{-\mu R_1^2 +  R_2^2/2} \on [0,T] \times B_{R_2}.
\end{equation*}
Let $\wthRone$ be the solution of \eqref{eq:fd:scheme:american} on $\mt$ with $g$ replaced by $g_{R_1}$.
This is the solution of the finite difference problem with $g$ cut off outside a ball, but on a grid on the whole space.
By Theorem~\ref{t:rate}, we know that
\begin{equation*}
|w_{R_1} - \wthRone| \leq C(\tau^{1/4} + h^{1/2}) \on [0,T]\times \R^d.
\end{equation*}
Finally, we consider $\wthRR$, which is the solution of \eqref{eq:fd:scheme:american} on $Q_R$ and with $g$ replaced by $g_{R_1}$.
By Lemma~\ref{lemma-fd-to-balls} there exists $\gamma \in (0,1)$, such that
\begin{equation*}
|\wthRone - \wthRR| \leq Ce^{\gamma(R_1  - R)} \on [0,T] \times B_{R_1}.
\end{equation*}
Hence, chaining the above three inequalities, we obtain
\begin{equation*}
|w - \wthRR| \leq C\left(\tau^{1/4} + h^{1/2} + e^{\gamma(R_1  - R)} +  e^{-\mu R_1^2 + R_2^2/2}\right),
\end{equation*}
which is the estimate for the error caused by both discretisation and localisation.
\end{proof}

\paragraph{\bf Acknowledgements} The author is grateful 
to Etienne Emmrich and Istv\'an Gy{\"o}ngy for pointing 
out a number of mistakes in the drafts and for many useful suggestions.

\bibliographystyle{cmam}

\bibliography{bibliography}

\begin{thebibliography}{10}

\bibitem{allegretto:finite}
W.~Allegretto, Y.~Lin, and H.~Yang, {\em Finite element error estimates for a
  nonlocal problem in {A}merican option valuation\/}, SIAM J. Numer. Anal.,
  {\bf 39} (2001), no.~3, pp. 834--857.

\bibitem{angermann:convergence}
L.~Angermann and S.~Wang, {\em Convergence of a fitted finite volume method for
  the penalized {B}lack-{S}choles equation governing {E}uropean and {A}merican
  option pricing\/}, Numer. Math., {\bf 106} (2007), no.~1, pp. 1--40.

\bibitem{barles:daher:romano:convergence}
G.~Barles, C.~Daher, and M.~Romano, {\em Convergence of numerical schemes for
  parabolic equations arising in finance theory\/}, Math. Models Methods Appl.
  Sci., {\bf 5} (1995), no.~1, pp. 125--143.

\bibitem{baroneadesi:saga}
G.~Barone-Adesi, {\em {The saga of the American put}\/}, J. Banking Finance,
  {\bf 29} (2005), no.~11, pp. 2909--2918.

\bibitem{berton:eymard:finite}
J.~Berton and R.~Eymard, {\em Finite volume methods for the valuation of
  {A}merican options\/}, M2AN Math. Model. Numer. Anal., {\bf 40} (2006),
  no.~2, pp. 311--330.

\bibitem{bonnans:consistency}
J.~F. Bonnans and H.~Zidani, {\em Consistency of generalized finite difference
  schemes for the stochastic {HJB} equation\/}, SIAM J. Numer. Anal., {\bf 41}
  (2003), no.~3, pp. 1008--1021 (electronic).

\bibitem{cen:robust}
Z.~Cen and A.~Le, {\em A robust finite difference scheme for pricing {A}merican
  put options with singularity-separating method\/}, Numer. Algorithms, {\bf
  53} (2010), no.~4, pp. 497--510.

\bibitem{dong:krylov:rate:constant}
H.~Dong and N.~V. Krylov, {\em On the rate of convergence of finite-difference
  approximations for {B}ellman equations with constant coefficients\/}, Algebra
  i Analiz, {\bf 17} (2005), no.~2, pp. 108--132.

\bibitem{dong:krylov:on:time:inhomogenous}
H.~Dong and N.~V. Krylov, {\em On time-inhomogeneous controlled diffusion
  processes in domains\/}, Ann. Probab., {\bf 35} (2007), no.~1, pp. 206--227.

\bibitem{forsyth:quadratic}
P.~A. Forsyth and K.~R. Vetzal, {\em Quadratic convergence for valuing
  {A}merican options using a penalty method\/}, SIAM J. Sci. Comput., {\bf 23}
  (2002), no.~6, pp. 2095--2122.

\bibitem{gyongy:krylov:on}
I.~Gy{\"o}ngy and N.~V. Krylov, {\em On the rate of convergence of splitting-up
  approximations for {SPDE}s\/}, in: {\em Stochastic inequalities and
  applications\/}, vol.~56 of {\em Progr. Probab.\/}, Birkh\"auser, Basel,
  2003, pp. 301--321.

\bibitem{gyongy:siska:on:randomized}
I.~Gy\"{o}ngy and D.~{\v S}i{\v s}ka, {\em On randomized stopping\/},
  Bernoulli, {\bf 14} (2008), no.~2, pp. 352--361.

\bibitem{gyongy:siska:on:the:rate}
I.~Gy{\"o}ngy and D.~{\v{S}}i{\v{s}}ka, {\em On finite-difference
  approximations for normalized {B}ellman equations\/}, Appl. Math. Optim.,
  {\bf 60} (2009), no.~3, pp. 297--339.

\bibitem{hu:optimal}
B.~Hu, J.~Liang, and L.~Jiang, {\em Optimal convergence rate of the explicit
  finite difference scheme for {A}merican option valuation\/}, J. Comput. Appl.
  Math., {\bf 230} (2009), no.~2, pp. 583--599.

\bibitem{jakobsen:rate:optimal:stopping}
E.~R. Jakobsen, {\em On the rate of convergence of approximation schemes for
  {B}ellman equations associated with optimal stopping time problems\/}, Math.
  Models Methods Appl. Sci., {\bf 13} (2003), no.~5, pp. 613--644.

\bibitem{krylov:controlled}
N.~V. Krylov, {\em Controlled diffusion processes\/}, Springer, New York, 1980.

\bibitem{krylov:rate:equations}
N.~V. Krylov, {\em On the rate of convergence of finite-difference
  approximations for {B}ellman's equations\/}, Algebra i Analiz, {\bf 9}
  (1997), no.~3, pp. 245--256.

\bibitem{krylov:rate:variable}
N.~V. Krylov, {\em On the rate of convergence of finite-difference
  approximations for {B}ellman's equations with variable coefficients\/},
  Probab. Theory Related Fields, {\bf 117} (2000), no.~1, pp. 1--16.

\bibitem{krylov:rate:lipschitz:published}
N.~V. Krylov, {\em The rate of convergence of finite-difference approximations
  for {B}ellman equations with {L}ipschitz coefficients\/}, Appl. Math. Optim.,
  {\bf 52} (2005), no.~3, pp. 365--399.

\bibitem{krylov:apriori}
N.~V. Krylov, {\em A priori estimates of smoothness of solutions to difference
  {B}ellman equations with linear and quasi-linear operators\/}, Math. Comp.,
  {\bf 76} (2007), no. 258, pp. 669--698.

\bibitem{lamberton:error}
D.~Lamberton, {\em Error estimates for the binomial approximation of {A}merican
  put options\/}, Ann. Appl. Probab., {\bf 8} (1998), no.~1, pp. 206--233.

\bibitem{lamberton:brownian}
D.~Lamberton, {\em Brownian optimal stopping and random walks\/}, Appl. Math.
  Optim., {\bf 45} (2002), no.~3, pp. 283--324.

\bibitem{lamberton:introduction}
D.~Lamberton and B.~Lapeyre, {\em Introduction to stochastic calculus applied
  to finance\/}, Chapman \& Hall, London, 1996.

\bibitem{lamberton:optimal}
D.~Lamberton and L.~C.~G. Rogers, {\em Optimal stopping and embedding\/}, J.
  Appl. Probab., {\bf 37} (2000), no.~4, pp. 1143--1148.

\bibitem{bensoussan:handbook}
O.~Pironneau and Y.~Achdou, {\em {Mathematical modeling and numerical methods
  in finance}\/}, in: {\em Handbook of numerical analysis\/} (A.~Bensoussan,
  Q.~Zhang, and P.~Ciarlet, eds.), vol.~15, North Holland, Amsterdam, 2009, pp.
  369--495.

\bibitem{shiryaev:essentials}
A.~N. Shiryaev, {\em Essentials of stochastic finance\/}, World Scientific
  Publishing Co. Inc., River Edge, NJ, 1999.

\bibitem{handbookofnumericalanalysis:vol1:pt1}
V.~Thom\'ee, {\em {Finite Difference Methods for Linear Parabolic
  Equations}\/}, in: {\em Handbook of Numerical Analysis\/} (P.~G. Ciarlet and
  J.-L. Lions, eds.), vol.~1, North-Holland, Amsterdam, 2003, pp. 5--196.

\bibitem{siska:numerical}
D.~\v{S}i\v{s}ka, {\em {Numerical Approximations of Stochastic Optimal Stopping
  and Control Problems}\/}, Ph.D. thesis, University of Edinburgh, 2007.

\end{thebibliography}
%{\addcontentsline{toc}{chapter}{Bibliography}}

\end{document}